\DeclarePairedDelimiter{\abs}{\lvert}{\rvert}
\newtheorem{theorem}{Theorem}
\newtheorem{definition}{Definition}
\newtheorem{lemma}{Lemma}
\newtheorem{proposition}{Proposition}
\newtheorem{problem}{Problem}
\newtheorem{corollary}{Corollary}
\journal{European Journal of Control}
\begin{document}

\begin{frontmatter}

\title{Resilient Coordinated Movement of Connected Autonomous Vehicles}
%\tnotetext[mytitlenote]{Fully documented templates are available in the elsarticle package on \href{http://www.ctan.org/tex-archive/macros/latex/contrib/elsarticle}{CTAN}.}

%\author{Mostafa Safi, Seyed Mehran Dibaji and Mohammad Pirani
%	\thanks{Mostafa Safi is with Aerospace Engineering Department, Amirkabir University of Technology, Tehran, Iran,
%		{\tt\small halebi@aut.ac.ir}}%
%	\thanks{Seyed Mehran Dibaji is with the Mechanical Engineering Department, Massachusetts Institute of Technology, Cambridge, MA, USA,
%		{\tt\small dibaji@mit.edu}}%
%	\thanks{Mohammad Pirani is with the University of Toronto, Toronto, Canada,
%		{\tt\small pirani@kth.se}}%
%}
%% Group authors per affiliation:
%\author{Elsevier\fnref{myfootnote}}
%\address{Radarweg 29, Amsterdam}
%\fntext[myfootnote]{Since 1880.}

%% or include affiliations in footnotes:
\author[safi]{Mostafa Safi}
\corref{mycorrespondingauthor}
\cortext[mycorrespondingauthor]{Corresponding author}
\ead{halebi@aut.ac.ir}

\author[dibaji]{Seyed Mehran Dibaji}
\ead{dibaji@mit.edu}

\author[pirani]{Mohammad Pirani}
\ead{pirani@kth.se}

\address[safi]{Aerospace Engineering Department, Amirkabir University of Technology, Tehran, Iran}
\address[dibaji]{Mechanical Engineering Department, Massachusetts Institute of Technology, Cambridge, MA, USA}
\address[pirani]{University of Toronto, Toronto, Canada}

\begin{abstract}
In this paper, we consider coordinated movement of a network of vehicles consisting of a bounded number of \textit{malicious} agents, that is, vehicles must reach consensus in longitudinal position and a common predefined velocity. The motions of vehicles are modeled by double-integrator dynamics and communications over the network are asynchronous with delays. Each normal vehicle updates its states by  utilizing the information it receives from vehicles in its vicinity. On the other hand, misbehaving vehicles make updates arbitrarily and might threaten the consensus within the network by intentionally changing their moving direction or broadcasting faulty information in their neighborhood. We propose an asynchronous updating strategy for normal vehicles, based on filtering extreme values received from neighboring vehicles, to save them from being misguided by malicious vehicles. We show that there exist topological constraints on the network in terms of graph robustness under which the vehicles resiliently achieve coordinated movement. Numerical simulations are provided to evaluate the results.
\end{abstract}

\begin{keyword}
Cooperative adaptive cruise control, autonomous vehicles, resilient consensus, graph robustness
\end{keyword}

\end{frontmatter}

%\linenumbers

\section{Introduction}
Modern intelligent vehicles are not only used for driving but are processors that can perform complicated tasks and connect to their surroundings \cite{Shensherman}. The advent of ever-growing Internet of Vehicles, along with cloud services, enables vehicles to communicate important information which can potentially be used for management of the networked vehicles or increasing the reliability of each vehicle's estimation and control individually \cite{BartBesselink2, LiangJohansson}. However, as the vehicles become more connected, they become more prone to adversarial actions and cyber-attacks. To this end, devising defense mechanisms, to increase the security for both intra-vehicle networking and inter-vehicular communications, is of great importance \cite{Jia, alipour2020impact, wired, Koscher}. An efficient defense mechanism must be able to prevent the attack as much as possible, detect the attack in case of happening, and satisfy a level of resilience in performing tasks despite the existence of an attack \cite{dibaji2019systems}. The focus of the current paper is the application of resilience methods to a Cooperative Adaptive Cruise Control (CACC) strategy (Fig.~\ref{fig.CACC}).

\begin{figure}[t] 
	\begin{center}
	\includegraphics[scale=0.45]{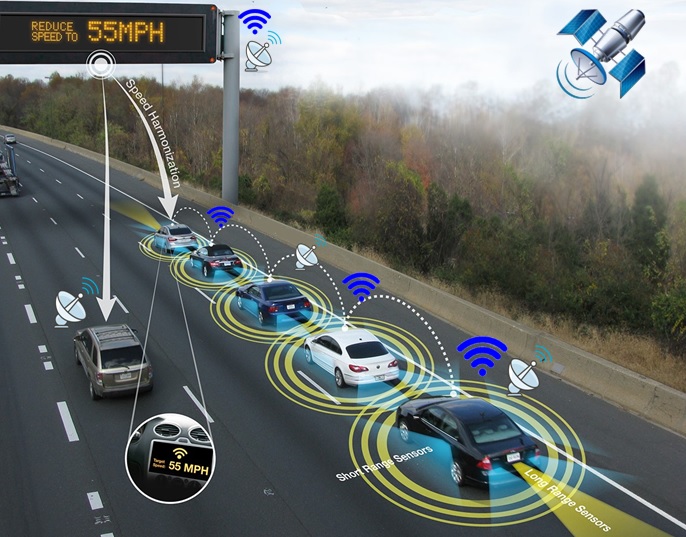}
	\caption{Cooperative adaptive cruise control in a network of vehicles with active and/or passive sensing systems, photo source (with some edits): US Department of Transportation.}
	\label{fig.CACC}
	\end{center}
\end{figure} 

Vehicle-to-vehicle (V2V) communications can provide direct data transfer which possess a much lower delay compared to radars \cite{van2006impact} and enable vehicles to move close together, while collision avoidance algorithms and congestion control protocols assist this strategy \cite{yuan2020race,forghani2015safety,ahmad2019v2v}. This will increase the amount of road throughput and reduce the need for developing more road networks. Cooperative adaptive cruise control, as one of the applications of V2V communications, is among the widely used methods in controlling highway traffic systems \cite{BartBesselink}. In this approach, vehicles tend to follow specific speeds while maintaining a safe distance from each other and at the same time consume as little space as possible in the highway to facilitate the traffic flow \cite{Papageorgiou}. Since wireless communication plays pivotal roles in CACC, we must make it resilient to malicious actions \cite{amoozadeh}. Attack-resilient algorithms in CACC context refer to a class of actions taken to bypass the attacker or mitigate its effects in order to improve performance in vehicle formation and velocity tracking. Similarly, in multi-agent systems, various consensus setups have been widely studied in the past decade \cite{mesbahi,renbook2}, where locally connected agents interact to achieve agreement by reaching a common state. In this literature, \textit{resilient consensus} expresses the situation where some of the agents in the network take some actions to deceive the others, are possibly crashed, or intentionally evade executing the local state updating rule. These types of consensus problems are frequently studied in the computer science field by designing distributed control algorithms(see, e.g., \cite{lynch}).

There are various techniques to relieve or counteract the effects of cyber attacks in multi-agent systems. In some solutions, each agent identifies the adversarial agents of the network by observing their information history -- a sort of fault detection and isolation strategy \cite{pasq,shames,sundaram}. 
However, usually in these techniques, each agent needs global knowledge of the network such as topology, which is neither desirable in distributed algorithms nor scalable. 
It is shown that to overcome the misbehaving of $f$ malicious vehicles, the topology is required to be at least $(2f + 1)$-connected \cite{pasq,sundaram}.

%\cite{safi2020filtering,dibajiishii2015,dibajiishiiSCL2015,
%,LeBlancPaper,zhang1}

In another type of distributed algorithms for resilient consensus, each agent utilizes a kind of filtering of the information packets delivered from the agents in its neighborhood containing extreme or invalid values at each time step. This class of algorithms are often called Mean Subsequence Reduced (MSR) algorithms, which was firstly introduced in \cite{Azadmanesh1993} and have been extensively used in the literature of computer science \cite{Azadmanesh2002,Azevedo,zohir,lynch,plunkett,Vaidya} 
as well as estimation and control \cite{safi2020filtering,dibajiishii2015,dibajiishiiSCL2015,LeBlancPaper,zhang1}.  \textcolor{black}{With MSR-type algorithms, there is no need to know the entire topology of the network; instead, a parameter $f$ is considered to secure the consensus in the worst case scenario, in which $f$ is an upper bound for the number of malicious nodes. That means there is no need that the agents know $f$ accurately. In fact, each node considers that there are at most $f$ malicious nodes within the neighborhood (or within the entire network). These types of assumptions are common in the literature of robust control.} In this literature, convergence of these algorithms is guaranteed by some constraints on the topology. While the traditional connectivity constraints are not enough for convergence of these algorithms, as is stated in \cite{zhang1}, graph \textit{robustness} has been recently used as a successful connectivity measure for different consensus problems to ensure that a network achieves agreement \cite{dibaji2017resilient,dibaji2019resilient}.

Our contributions in this paper are threefold:

\begin{itemize}
\item {\bf Modeling and formulation of resilient longitudinal coordinated movement of autonomous vehicles:} we use resilient consensus notions introduced in computer science literature \cite{dibaji2017resilient} to solve a real world problem in multi-vehicle coordinated movement. We consider longitudinal cooperative cruise control in the presence of some malicious vehicles in the network. Therefore, we assume that the vehicles move in parallel or each vehicle is equipped with a Collision Avoidance System (CAS) to prevent colliding with its neighbors and can switch its lane to overtake frontier vehicles if required. We use the typical second-order dynamics to model each autonomous vehicle within the network. Also, each vehicle makes updates based on its current position and velocity and those most recently received from its neighbors. The control input of each vehicle is applied through its acceleration.

\item { \bf Developing distributed algorithm and update rule for each vehicle to reach agreement considering asynchrony and delays in communications:} We develop our distributed algorithms in an asynchronous setting, where each normal vehicle may decide to update from time to time with possibly delayed data packets received from the agents in its neighborhood. This is a susceptible situation which allows the malicious vehicles to take advantage and broadcast different information, including their states, in different time intervals to reachable vehicles or change their movement quickly to appear in different states in perspective of the other vehicles. We suppose that the worst-case scenario may happen, where the malicious vehicles have global knowledge about the topology, updating times, the transmitted data packets by normal vehicles, and even delays of communications. On the other hand, the normal vehicles only have access to the data directly received from their neighbors; thus they cannot predict adversaries' behavior.

\item {\bf Analyzing the topology constraint required for the resilient cooperative cruise control:} The proposed algorithm and update rule cannot immediately lead to the success of the resilient coordinated movement. There must be specific constraints on the topology to guarantee the convergence. We analyze these topology constraints based on robustness notions developed in the literature.
\end{itemize} 

The outline of this paper is as follows. In Section~\ref{sect:problemsetting}, we introduce the preliminaries and problem setup. Our main results including the update rule, the filtering algorithm, and the required topology constraints are presented in Section~\ref{sect: main}. We show the effectiveness of our strategy by simulation examples in Section~\ref{sect: simulations}. Finally, Section~\ref{sect: conclusion} concludes the paper.

\section{Preliminaries}\label{sect:problemsetting}

\subsection{Graph Theory Notions}\label{graphnot}
%The interaction among the vehicles are modeled by a graph. 
According to \cite{mesbahi}, we recall some preliminary concepts on graphs. A digraph with $n$ nodes $(n> 1)$ is defined as $\mathcal{G}[k]=\left(\mathcal{V}[k],\mathcal{E}[k]\right)$, $k \in \mathbb{Z}_+$, with the set of nodes $\mathcal{V}=\{1,\ldots,n\}$ and the set of edges  $\mathcal{E}\subseteq \mathcal{V}\times\mathcal{V}$. We sometimes drop down the step time $k$ for the sake of simplicity. An incoming link from node $j$ to node $i$ is denoted by $(j,i)\in \mathcal{E}$. The adjacency matrix corresponding to $\mathcal{G}[k]$ is defined by $A[k] \in \mathbb{R}^{n\times n}$. A graph is called complete if $\forall i,j \in \mathcal{V}, i \neq j: (i,j) \in \mathcal{E}$. For node $i$, the set of its neighbors is denoted by $\mathcal{N}_i=\{j \vert (j,i)\in \mathcal{E}\}$ and the number of its neighbors, i.e. its degree, is denoted by ${d}_i=\abs{\mathcal{N}_i}$.

A path is a subset of nodes $\mathcal{P} = \{i \vert (i+1,i)\in \mathcal{E},i = 1, \ldots,p-1, p>1\}$. If there is a path between each pair of nodes in the directed graph $\mathcal{G}$, it is said to be strongly connected.
The vertex connectivity $K(\mathcal{G})$ of the graph 
$\mathcal{G}$ is the minimum number of nodes such that by removing them and all associated edges, the graph is not strongly connected anymore. Then, the graph is said to be $\kappa$-connected if $K(\mathcal{G}) \geq \kappa$. A directed graph is said to have a directed spanning tree if there exists a node in the graph from which there is a path to every other node. Note that we will use the terms node and vehicle interchangeably in this paper.

Among connectivity measures of a graph, robustness is the critical topological notion for the MSR-type algorithms to achieve resilient consensus. Robust graphs were coined in \cite{zhang1} 
for the analysis of resiliency of consensus in multi-agent systems  against cyber-attacks. 

\begin{definition}   \rm \label{(def:robustgraph)}
($(r,s)$-robust) The digraph $\mathcal{G}= (\mathcal{V},\mathcal{E})$ is $(r,s)$-robust $(r,s<n)$ if for every pair of nonempty disjoint subsets $\mathcal{V}_1,\mathcal{V}_2 \subset \mathcal{V}$, at least one of the following conditions is satisfied:
\begin{enumerate}
\item[1.] $\mathcal{X}_{\mathcal{V}_1}^r =\mathcal{V}_1$,
\item[2.] $\mathcal{X}_{\mathcal{V}_2}^r=\mathcal{V}_2$,
\item[3.] $\abs{\mathcal{X}_{\mathcal{V}_1}^r} +\abs{\mathcal{X}_{\mathcal{V}_2}^r} \geq s$,
\end{enumerate}
where $\mathcal{X}^r_{\mathcal{V}_{\ell}}$ is the set of all nodes in ${\mathcal{V}_{\ell}}$ which have at least $r$ incoming edges from outside of ${\mathcal{V}_{\ell}}$.
In particular, graphs which are $(r,1)$-robust are called $r$-robust.
\end{definition}
To have a better understanding of the notions of robustness, refer to \ref{sect: appendix}.
\subsection{Physics of the System}

Consider a network of vehicles driving in a road whose communications are through the directed graph $\mathcal{G}$. 
Each vehicle $i\in\mathcal{V}$ has a second-order dynamic system given by
\begin{align}
\begin{split}
\dot{x_i}(t)=v_{i}(t),~~ \dot{v}_i(t)=u_i(t),\qquad {}  i=1,\ldots,n, 
\end{split}
\label{eq.dyn}
\end{align}
where $x_i(t)\in{\mathbb{R}}$ and $v_i(t)\in{\mathbb{R}}$ are the longitudinal position and velocity of the $i$th vehicle, respectively, and $u_i(t)$ is its control input at time $t\geq 0$. 
The discrete form of the dynamic system \eqref{eq.dyn} 
with sampling period $T$ is represented as
\begin{align}
\begin{split}
x_i[k+1]&=x_i[k]+Tv_i[k]+\frac{T^2}{2}u_i[k],\\
v_i[k+1]&=v_i[k]+Tu_i[k],\qquad  i=1,\ldots,n,
\end{split}
\label{eq.dyndiscrete}
\end{align}
where $x_i[k]$, $v_i[k]$, and $u_i[k]$ are, respectively, the position, the velocity, and the control input of the $i$th vehicle at $t=kT$ for $k\in \mathbb{Z}_+$, where $T$ is the sampling period \cite{renbook2}. 

\subsection{Problem Setup}
In this paper, we investigate the coordinated movement of networked vehicles in the sense that they reach a same fixed velocity asymptotically leading to a formation with a predefined (safe) inter-vehicular distance: $x_i[k]-x_j[k] \rightarrow \delta_{ij}$, $v_i[k]\rightarrow r$ as $k\rightarrow\infty$, $\forall i,j\in \mathcal{V}$, where $\delta_{ij}$ is \textcolor{black}{the desired relative position of node $i$ with respect to $j$} and $r$ is the desired velocity of the networked vehicles which is assumed to be known for all. \textcolor{black}{We intentionally defined $\delta_{ij}$ as a distributed parameter because of two reasons: i) we aimed to minimum the global information known by the agents in the network, ii) it gives the vehicles the opportunity to set the inter-vehicular distances based on their size and sensing systems. For example, it would be safer for two trucks to have a longer relative distance.}

In this work, we investigate the case where some vehicles misbehave because of damage, disturbances, or various cyber attacks. \textcolor{black}{Note that we focus on the consequences of such attacks or failures in consensus of the network, not the source of attacks (or failures). For example, some malicious vehicles might intentionally send false data to their neighbors in the network or, alternatively, some vehicles might suddenly crash and lose their control. In other words, nature of failures makes no difference in our problem setup.} In order to formulate the problem, we elaborate on some notions regarding the communications in the network and consensus in the presence of malicious vehicles.

There are two possible situations in which malicious vehicles might deceive the normal vehicles and prevent them from reaching consensus based on their equipment (refer to Fig.~\ref{fig.CACC}), described below:

\begin{itemize}
\item[i)] Vehicles use an active sensing system: they can estimate the state of their neighbors on their own (for example using their 3D camera or LiDAR system.
\item[ii)] Vehicles use a passive sensing system: they trust the information they receive from their neighbors (for example using their GPS receiver or wireless communication system which are susceptible to cyber attacks).
\end{itemize}

In the first case, malicious vehicles can intentionally change their moving direction or oscillate by avoiding any prescribed update rule and choosing arbitrary control inputs. Accordingly, all normal vehicles are supposed to follow. In the second case, malicious vehicles can arbitrarily broadcast any information in their neighborhood to deceive their neighbors. Note that the dynamics for all vehicles still remain the same as \eqref{eq.dyndiscrete}.

Accordingly, we divide the vehicles into two groups of malicious and normal vehicles as follows.

\begin{definition}  \label{def.mal} \rm
(Malicious and Normal Vehicles) Vehicle $i$ is called malicious if it can evade following any prescribed algorithm for updating its control input or broadcast false state feedback to its neighbors. The malicious vehicles are assumed to be omniscient, i.e. they have full knowledge of the topology, updating times, the transmitted data packets by normal vehicles, and even delays for all communication links and all $k \geq 0$ --  this is a reasonable assumption as it takes the worst case scenarios. Otherwise, it is called normal. The set of malicious vehicles is denoted by $\mathcal{M}\subset\mathcal{V}$. 
\end{definition}

Furthermore, we assume that the number of malicious vehicles, all over the network or at least in the neighborhood of each normal vehicle, is upper bounded. 

\begin{definition} \rm
($f$-total Malicious Model) The network is $f$-total malicious if the number of malicious vehicles in the whole network is at most $f$.
\end{definition} 

\begin{definition} \rm
($f$-local Malicious Model) The network is $f$-local malicious if the number of malicious vehicles in the neighborhood of each normal vehicle $i$ is bounded by $f$, i.e., $\abs{\mathcal{N}_i \cap \mathcal{M}} \leq f$, $i \in \mathcal{V}/\mathcal{M}$.  
\end{definition} 

Now, we formally define the concept of resilient coordinated movement for the proposed network of vehicles as follows.

\begin{definition} \label{def.velcons}\rm
(Resilient Coordinated Movement) For any possible set of malicious vehicles and their arbitrarily chosen inputs, the network of normal vehicles is said to achieve resilient coordinated movement if it holds that 
$x_j[k]-x_i[k] \rightarrow \delta_{ij}$, $v_i[k]\rightarrow r$ as $k\rightarrow\infty$, $\forall i,j\in \mathcal{V} \setminus \mathcal{M}$, where $\delta_{ij}$ is a predefined (safe) inter-vehicular distance between the nodes $i$ and $j$ and $r$ is the desired velocity of the networked normal vehicles.
\end{definition}
%Note that if the networked vehicles meet the condition on their velocities, then they reach a formation with relative positions.

In practice, the parameter $\delta_{ij}$ can be properly set by the CAS of each vehicle to keep a safe distance with the other vehicles in the ego lane avoiding any collision. Furthermore, the vehicles might not have synchronous and delay-free communications with all the neighbors simultaneously. Thus, the solution must be robust against both delays and asynchrony which are very important in real world applications. 

Finally, the main problem which we consider in this paper is as follows: 
\begin{problem} \rm
Under the $f$-total / $f$-local malicious model, find a condition on the network topology so that the normal vehicles reach the resilient coordinated movement using an asynchronous update rule. 
\end{problem}

\section{Main Results} \label{sect: main}
In this section, we propose the update rule and MSR-type algorithm by which the normal vehicles are able to reach the coordinated movement in the presence of misbehaving vehicles. We present the updating strategy compatible with communication delays and asynchrony. Therefore, the vehicles are allowed to update occasionally using delayed data packets. Note that the updating strategies that are developed here must be enhanced with the constraints on the topology of the network which will be discussed in the next section.

\subsection{Update Rule}
We modified the algorithm and update rule proposed in \cite{dibajiishiiSCL2015} to ADP-MSR (Asynchronous Double-integrator Position-based Mean Subsequence Reduced), which suits the problem of resilient coordinated movement. Each normal vehicle distributively uses the relative position to its neighbors and its own velocity as the feedback. 

To develop the update rule for coordinated movement, first we use a change of variables as follows:

\begin{equation} \label{eq.varch}
\begin{split}
x_i[k]&=p_i[k]+kTr\\
v_i[k]&=q_i[k]+r,
\end{split}
\end{equation}
where, $p_i[k]$ and $q_i[k]$ are the transformed variables and $r$ is the desired velocity of the networked vehicles based on Def.~\ref{def.velcons}. Substituting the new variables into the dynamic system \eqref{eq.dyndiscrete}, we have:
\begin{align} \label{eq.mdyn}
\begin{split}
p_i[k+1]&=p_i[k]+Tq_i[k]+\frac{T^2}{2}u_i[k]\\
q_i[k+1]&=q_i[k]+Tu_i[k].
\end{split}
\end{align}
Interestingly, the form of the transformed dynamic system \eqref{eq.mdyn} is the same as \eqref{eq.dyndiscrete}. \textcolor{black}{Based on Theorem 4.2 in \cite{dibajiishiiSCL2015}, the system dynamic \eqref{eq.dyndiscrete} asymptotically reaches consensus (in the sense that $\lim_{k \to \infty} x_j[k]-x_i[k] = \delta_{ij}$) in a network with communication delays and asynchrony with the control input:}

\begin{equation*}
\textcolor{black}{u_i[k]
= \sum_{j \in \mathcal{N}_i} 
a_{ij}[k]\bigl(
x_j[k-\tau_{ij}[k]]-x_i[k] - \delta_{ij}
\bigr) -\alpha_i  v_i[k].}
\end{equation*}

  \textcolor{black}{Equivalently, the transformed dynamic system \eqref{eq.mdyn} asymptotically reaches consensus (in the sense that $\lim_{k \to \infty} p_j[k]-p_i[k] = \delta_{ij}$, $q_i[k]\rightarrow 0$ as $k\rightarrow\infty$, $\forall i,j\in \mathcal{V}$) in a network with communication delays and asynchrony with the control input:}

\begin{equation} \label{eq.pqupdate}
u_i[k]
= \sum_{j \in \mathcal{N}_i} 
a_{ij}[k]\bigl(
p_j[k-\tau_{ij}[k]]-p_i[k] - \delta_{ij}
\bigr) -\alpha_i  q_i[k],
\end{equation}
where $a_{ij}[k]$ is the $(i, j)$ entry of the adjacency matrix $A[k] \in \mathbb{R}^{n\times n}$ associated with $\mathcal{G}$ , $\tau_{ij}[k] \in \mathbb{Z}_+$ denotes the time delay corresponding to the edge $(j,i)$ at time $k$ and $\alpha_i$ is a positive scalar. For the sake of simplicity, $p_j[k-\tau_{ij}[k]]-p_i[k]$ is called the relative position of vehicle $j$ to vehicle $i$ in the rest of the paper. Recalling the variable change of \eqref{eq.varch}, from the viewpoint of vehicle $i$, the most recent information regarding vehicle $j$ at time $k$ is the position of $j$ at time $k-\tau_{ij}[k]$ relative to its own current position. While the communications delays are assumed to have the common upper bound $\tau$, they can be different at each edge and even time varying defined as
\begin{equation}\label{eqn:delay_tau}
0 \leq \tau_{ij}[k] \leq \tau,~~
(j,i)\in\mathcal{E},~k\in\mathbb{Z}_+.
\end{equation}

\textcolor{black}{Using transformation \eqref{eq.varch}, the expressions of $\lim_{k \to \infty} p_j[k]-p_i[k] = \delta_{ij}$ and $\lim_{k \to \infty} q_i[k] = 0$, can be transformed to the expressions of $\lim_{k \to \infty} x_j[k]-x_i[k] = \delta_{ij}$ and $\lim_{k \to \infty} v_i[k]= r$, $\forall i,j\in \mathcal{V} \setminus \mathcal{M}$ which represent what we called ``resilient coordinated movement'' in Definition \ref{def.velcons}.}

%Thus, based on the update rule proposed in \cite{dibajiishiiSCL2015}, the network of vehicles with this dynamic system can achieve consensus in the sense that \textcolor{black}{$p_j[k]-p_i[k] \rightarrow \delta_{ij}$}, $q_i[k]\rightarrow 0$ as $k\rightarrow\infty$, $\forall i,j\in \mathcal{V}$ by using the following update rule:
%\begin{equation} \label{eq.pqupdate}
%u_i[k]
%= \sum_{j \in \mathcal{N}_i} 
%a_{ij}[k]\bigl(
%p_j[k-\tau_{ij}[k]]-p_i[k] - \delta_{ij}
%\bigr) -\alpha_i  q_i[k],
%\end{equation}
%
%If all the vehicles in the network perform the update rule \eqref{eq.pqupdate}, we have: $p_j[k-\tau_{ij}[k]]-p_i[k] \rightarrow \delta_{ij}$, $q_i[k]\rightarrow 0$ or equivalently $x_j[k-\tau_{ij}[k]]-x_i[k] \rightarrow \delta_{ij} + \tau_{ij}[k]Tr$, $v_i[k]\rightarrow r$ as $k\rightarrow\infty$, $\forall i,j\in \mathcal{V} \setminus \mathcal{M}$. 

According to \eqref{eq.pqupdate} and \eqref{eqn:delay_tau}, note that each normal vehicle receives the position value of its neighbors at least once in $\tau$ time steps, but possibly in an asynchronous manner. Also, vehicle~$i$ uses its own velocity without delay in the update rule. The value of $\tau$ is not required to be known to the vehicles as it is not utilized in the update rule.

We also emphasize that, in fully asynchronous settings, vehicles must also be facilitated with their own clocks \cite{JiahuHirche}. However, we consider the so-called \textit{partially asynchronous} updating setting in this paper. This is a common term in the literature for those update protocols with both delay and different update times \cite{Bertsekas} and in fact contains some level of synchrony meaning that all vehicles use the same clock. Considering delays in communicated data packets to address partial asynchrony has been studied in \cite{Gao,Cheng-Lin,JiahuHirche}.

The malicious vehicles are assumed to be omniscient, i.e. they have full knowledge of the topology, updating times, the transmitted data packets by normal vehicles, and even delays $\tau_{ij}[k]$ for all communications and for $k \geq 0$. The malicious vehicles can take advantage of this knowledge to make deceiving back and forth movements or broadcast faulty data packets to confound and prevent the normal vehicles to reach consensus. However, any misbehavior by malicious vehicles might only make the convergence time longer and cannot affect the main outcome of our method, i.e. prevent the vehicles from reaching consensus.

The ADP-MSR algorithm which is executed by each vehicle at each time step $k$ is outlined in Algorithm~\ref{alg1}. The simplicity of this algorithm is its main feature. Each normal vehicle disregards the misleading information -- extreme values -- received from its neighbors by neglecting the incoming edges from those suspicious neighbors. Then, the remaining edges determine the underlying graph $\mathcal{G}[k]$. \textcolor{black}{In this algorithm, there is no need to know the entire topology of the network; instead, we consider a parameter $f$ to secure the consensus in the worst case scenario, in which $f$ is an upper bound for the number of malicious vehicles. That means the normal vehicles do not need to know $f$ accurately. In fact, each vehicle considers that there are at most $f$ malicious nodes within the neighborhood (or within the entire network). Note that these types of assumptions are common in the literature of robust control.}

\begin{algorithm}[t]
	\SetAlgoLined
	At each step time $k$,\\
	\uIf{vehicle $i$ decides to make an update}{
		\For{$j \in \mathcal{N}_i$}{
			The vehicle $i$ calculates $p_j[k-\tau_{ij}[k]]-p_i[k]-\delta_{ij}$ based on the most recent position values.  
		}
	Vehicle $i$ sorts the calculated values from the largest to the smallest.\\
		\uIf{there are less than $f$ vehicles that $p_j[k-\tau_{ij}[k]]-p_i[k] - \delta_{ij} \geq 0$}{
		The normal vehicle $i$ ignores the incoming edges from those vehicles.\\
		}
		\Else{
		The normal vehicle $i$ ignores the incoming edges of $f$ vehicles, which have the largest relative position values.\\
		}
		\uIf{there are less than $f$ vehicles that $p_j[k-\tau_{ij}[k]]-p_i[k] - \delta_{ij} \leq 0$}{
		The normal vehicle $i$ ignores the incoming edges from those vehicles.
		}
		\Else{
		The normal vehicle $i$ ignores the incoming edges of $f$ vehicles, which have the smallest relative position values.
		}
		Vehicle $i$ applies the control input \eqref{eq.pqupdate} by the substitution $a_{ij}[k] = 0$ for edges $(j,i)$ which are ignored.
	}
	\Else{
		Vehicle $i$ applies the control \eqref{eq.pqupdate} where the position values of its neighbors remain the same as time step $k-1$.\\
	}
	\KwResult{$u_i[k]$	
	}

	\caption{ADP-MSR} 
	\label{alg1}
	
\end{algorithm}

\begin{theorem}\label{th.sufnec}\rm
Under the $f$-total malicious model, the network of vehicles with second-order dynamics utilizing the control input \eqref{eq.pqupdate} and the ADP-MSR algorithm comes to resilient coordinated movement with an exponential convergence rate, if the underlying graph is $(2f+1)$-robust, and if it comes to resilient coordinated movement, the underlying graph is at least $(f+1,f+1)$-robust.
\end{theorem}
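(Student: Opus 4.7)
The plan is to work in the transformed coordinates $(p_i,q_i)$ from \eqref{eq.varch}, since achieving resilient coordinated movement in $(x_i,v_i)$ is equivalent to driving $p_j[k]-p_i[k]\to\delta_{ij}$ and $q_i[k]\to 0$ for all normal $i,j$. To absorb the offsets $\delta_{ij}$, I would fix an arbitrary normal reference vehicle and pass to the shifted variables $\tilde p_i[k]:=p_i[k]-\delta_{i0}$, so that \eqref{eq.pqupdate} becomes a standard weighted-average update $u_i[k]=\sum_{j\in\mathcal N_i} a_{ij}[k](\tilde p_j[k-\tau_{ij}[k]]-\tilde p_i[k])-\alpha_i q_i[k]$ with the same graph. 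The ADP-MSR filtering acts identically on $\tilde p$, so the analysis reduces to resilient second-order consensus to the common value zero, in the sense that $\tilde p_i[k]\to c$ for some $c$ and $q_i[k]\to 0$.

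For the sufficiency direction (i.e., $(2f+1)$-robust implies resilient coordinated movement), I would define two envelopes that track the range of normal positions and velocities over the most recent $\tau+1$ steps, e.g.\ $\Psi[k]:=\max_{i\notin\mathcal{M}}\max_{k-\tau\le s\le k}\tilde p_i[s]-\min_{i\notin\mathcal{M}}\min_{k-\tau\le s\le k}\tilde p_i[s]$ and the analogous $\Phi[k]$ for $q_i$, following the style of \cite{dibajiishiiSCL2015}. The key MSR observation is that after the filter step each normal vehicle forms a convex combination (involving its own previous state) of values lying within the normal envelope, because any accepted neighbor value either comes from a normal vehicle or is sandwiched between two normal values that were not removed; here the $f$-total bound ensures that the $2f$ removed extremes cover every malicious transmission. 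The $(2f+1)$-robustness is then used, via the standard argument of \cite{zhang1}, to guarantee that at every time step at least one normal vehicle on the boundary of the current extremal set has at least $f+1$ incoming edges from outside that set, hence retains at least one genuinely ``interior'' normal neighbor after filtering. Combining this with the delay-tolerant contraction machinery of Dibaji--Ishii for the double integrator, $\Psi[k]$ and $\Phi[k]$ decrease by a fixed factor over every horizon of length $(n+\tau)$, yielding exponential convergence of both envelopes to zero.

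For the necessity direction, I would argue by contrapositive: suppose the graph is not $(f+1,f+1)$-robust. Then there exist nonempty disjoint sets $\mathcal V_1,\mathcal V_2$ such that neither $\mathcal X^{f+1}_{\mathcal V_1}=\mathcal V_1$ nor $\mathcal X^{f+1}_{\mathcal V_2}=\mathcal V_2$ holds and $|\mathcal X^{f+1}_{\mathcal V_1}|+|\mathcal X^{f+1}_{\mathcal V_2}|\le f$. Place all malicious agents on $\mathcal X^{f+1}_{\mathcal V_1}\cup\mathcal X^{f+1}_{\mathcal V_2}$ (possible since the total is at most $f$) and let them hold constant, spoofed positions: every vehicle in $\mathcal V_1$ broadcasts a large value $b$ to its out-neighbors in $\mathcal V_1^c$ and similarly for $\mathcal V_2$ with a value $-b$. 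Initialize the normal vehicles in $\mathcal V_1\setminus\mathcal M$ at $\tilde p=a$ and those in $\mathcal V_2\setminus\mathcal M$ at $\tilde p=-a$, with $q_i\equiv 0$, and the remaining normal vehicles at $0$. Each normal vehicle in $\mathcal V_1\setminus\mathcal X^{f+1}_{\mathcal V_1}$ receives at most $f$ values from outside $\mathcal V_1$, so the ADP-MSR filter discards all of them; by symmetry the same holds for $\mathcal V_2$. The normal subgraph therefore decouples into components that cannot converge to a common value, contradicting resilient coordinated movement.

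I expect the main obstacle to be the sufficiency part, specifically the delicate bookkeeping needed to patch the scalar MSR/robustness contraction argument onto the double-integrator dynamics with heterogeneous, time-varying, possibly-large delays $\tau_{ij}[k]\le\tau$ and asynchronous updates. The cleanest route is to cite Theorem~4.2 of \cite{dibajiishiiSCL2015} as a black box for the delay/asynchrony contraction and to concentrate the new content on two points: (i) showing that the ADP-MSR-filtered update at each normal node still admits the convex-combination form required by that theorem, with strictly positive self-weight bounded below by $\alpha_i$ and at least one retained normal neighbor, and (ii) verifying that $(2f+1)$-robustness guarantees this retention uniformly in $k$ despite adversarial choice of delays and malicious broadcasts. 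Once these are in place, exponential decay of $\Psi[k]$ and $\Phi[k]$ follows, and the inverse transformation \eqref{eq.varch} translates the conclusion back to $x_j[k]-x_i[k]\to\delta_{ij}$ and $v_i[k]\to r$.
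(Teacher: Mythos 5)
Your proposal follows essentially the same route as the paper: both directions work in the transformed $(p,q)$ coordinates, the sufficiency part leans on the Dibaji--Ishii resilient double-integrator machinery (the paper simply cites Theorem~4.2 of \cite{dibaji2017resilient} as a black box rather than redoing the envelope contraction), and the necessity part is the identical construction that places the at-most-$f$ malicious vehicles on $\mathcal{X}^{f+1}_{\mathcal{V}_1}\cup\mathcal{X}^{f+1}_{\mathcal{V}_2}$, holds them constant, and shows the remaining normal vehicles in $\mathcal{V}_1$ and $\mathcal{V}_2$ filter out all outside information and never agree. One minor slip worth fixing in your sufficiency sketch: under $(2f+1)$-robustness the relevant extremal normal node has at least $2f+1$ in-neighbors outside its set, of which at least one survives the removal of the $2f$ extreme values --- your ``at least $f+1$ incoming edges from outside'' count would not by itself guarantee that a retained neighbor remains after filtering.
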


\begin{proof} 
(Sufficiency) The proof is similar to what is presented in Theorem 4.2 of \cite{dibaji2017resilient}. The proof there is presented for the position consensus of the original dynamic system \eqref{eq.dyn}. Here, the dynamic system is replaced with \eqref{eq.mdyn} and the result is valid for $p[k]$ and $q[k]$, i.e. \textcolor{black}{$p_j[k]-p_i[k] \rightarrow \delta_{ij}$, $q_i[k]\rightarrow 0$, $\forall i,j\in \mathcal{V} \setminus \mathcal{M}$ as $k \rightarrow \infty$. Equivalently, according to \eqref{eq.varch}, $x_j[k]-x_i[k] \rightarrow \delta_{ij}$, $v_i[k]\rightarrow r$ as $k\rightarrow\infty$, $\forall i,j\in \mathcal{V} \setminus \mathcal{M}$. Thus, all the normal vehicles asymptotically reach a same velocity of $r$ and meet a formation with the distributed inter-vehicular distance of $\delta_{ij}$, which is the resilient coordinated movement.}

(Necessity) We consider the synchronous networks without communication delays as the proof of necessity is also valid for the more general case of partially asynchronous networks with communication delays. Contradiction is used for the proof. 
Suppose that the network is not $(f+1,f+1)$-robust. 
Then, there are nonempty disjoint sets $\mathcal{V}_1 , \mathcal{V}_2 \subset \mathcal{V}$ such that none of the conditions 1--3 in Definition~\ref{(def:robustgraph)} are held, i.e.

\begin{itemize}
\item[1.] \textcolor{black}{$\vert \mathcal{X}_{\mathcal{V}_1}^{f+1} \vert < \vert\mathcal{V}_1 \vert$,}
\item[2.] \textcolor{black}{$\vert \mathcal{X}_{\mathcal{V}_2}^{f+1} \vert < \vert\mathcal{V}_2 \vert$,}
\item[3.] \textcolor{black}{$\vert \mathcal{X}_{\mathcal{V}_1}^{f+1} \vert +\vert\mathcal{X}_{\mathcal{V}_2}^{f+1}\vert \leq f$,}
\end{itemize}
\textcolor{black}{where by the definition $\mathcal{X}_{\mathcal{V}_1}^{f+1}$ and $\mathcal{X}_{\mathcal{V}_2}^{f+1}$ are the subsets of $\mathcal{V}_1$ and $\mathcal{V}_2$ whose their nodes have at least $f+1$ incoming links.
We assume that $p_i[0]=a$, $\forall i \in \mathcal{V}_1$ and $p_j[0]=b$, $\forall j \in \mathcal{V}_2$, where $a < b$. Let $p_\ell [0]=c$, where $a \leq c \leq b$, $\forall \ell \in \mathcal{V} \setminus (\mathcal{V}_1 \cup \mathcal{V}_2)$. We also assume that $q_i[0]=0$, $\forall i \in \mathcal{V}$.  
From condition~3, we have $\abs{\mathcal{X}_{\mathcal{V}_1}^{f+1}}+\abs{\mathcal{X}_{\mathcal{V}_2}^{f+1}}\leq f$. Also, we suppose that all vehicles in $\mathcal{X}_{\mathcal{V}_1}^{f+1}$ and $\mathcal{X}_{\mathcal{V}_2}^{f+1}$ are malicious and hold on to the constant values. Therefore, since the condition 1 and 2 are not held, i.e. $\vert \mathcal{X}_{\mathcal{V}_1}^{f+1} \vert < \vert\mathcal{V}_1 \vert$ and $\vert \mathcal{X}_{\mathcal{V}_2}^{f+1} \vert < \vert\mathcal{V}_2 \vert$, we can conclude that there exist at least one normal vehicle in $\mathcal{V}_1$ and one normal vehicle in $\mathcal{V}_2$ which have $f$ or fewer incoming neighbors outside of their own sets as they are not in $\mathcal{X}_{\mathcal{V}_1}^{f+1}$ and $\mathcal{X}_{\mathcal{V}_2}^{f+1}$. Consequently, these normal vehicles in $\mathcal{V}_1$ and $\mathcal{V}_2$ update based only on the values inside $\mathcal{V}_1$ and $\mathcal{V}_2$ by removing the values received from outside of their sets. This makes their values unchanged at $a$ and $b$. Thus, the normal vehicles do not come into agreement.} 
\end{proof}

\textcolor{black}{In our problem setup, we aim to find topological conditions that for \textit{any} set of malicious nodes and \textit{any} malicious behavior, resilient coordinated movement is achieved. For example, what we proved in the necessary condition is that for any topological condition less restrictive than a $(f+1,f+1)$-robust, there is a set of malicious nodes or behaviors that fail the normal vehicles to achieve the resilient coordinated movement. To elaborate more on this, consider a spanning tree which is the necessary condition for the resilient consensus if the malicious nodes take no misleading behavior regardless of the delayed and asynchronous communications. However, the malicious vehicles will have more freedom to deceive more normal vehicles in a network with communication delays and asynchrony. So, the necessary condition for the synchronous case ($(f +1, f +1)$-robustness) is also a
necessary condition for the asynchronous case.}

\textcolor{black}{Also, note that authors in \cite{dibaji2017resilient} have considered the case where all normal agents agree on their positions and then stop. This paper pushes the previous research one step forward in the sense that the positions and velocities of the networked agents reach consensus simultaneously -- the problem whose solution was non-trivial after \cite{dibaji2017resilient}.}

Furthermore, in Theorem~\ref{th.sufnec}, we observe a gap between the sufficiency and the necessity conditions. This point is illustrated by a $2f$-robust graph in Fig.~\ref{fig: AsyncProp}, which is not resilient to $f$ totally bounded adversarial vehicles as we will discuss in what follows. 

\begin{figure}[t]
	%\vspace{-5cm}
	\def \svgscale{.45}
	\hspace{.8cm} %for two columns
	%\hspace{2.7cm}
	%% Creator: Inkscape inkscape 0.92.4, www.inkscape.org
%% PDF/EPS/PS + LaTeX output extension by Johan Engelen, 2010
%% Accompanies image file '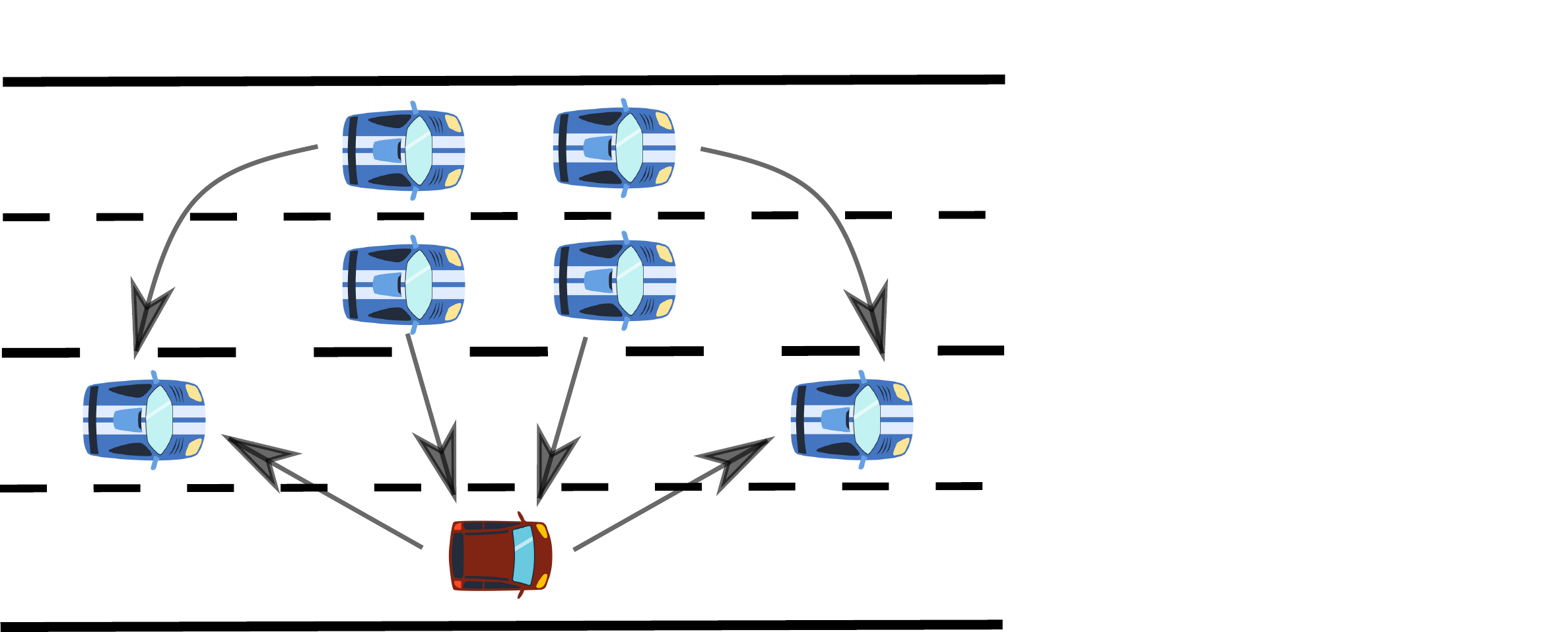' (pdf, eps, ps)
%%
%% To include the image in your LaTeX document, write
%%   \input{<filename>.pdf_tex}
%%  instead of
%%   \includegraphics{<filename>.pdf}
%% To scale the image, write
%%   \def\svgwidth{<desired width>}
%%   \input{<filename>.pdf_tex}
%%  instead of
%%   \includegraphics[width=<desired width>]{<filename>.pdf}
%%
%% Images with a different path to the parent latex file can
%% be accessed with the `import' package (which may need to be
%% installed) using
%%   \usepackage{import}
%% in the preamble, and then including the image with
%%   \import{<path to file>}{<filename>.pdf_tex}
%% Alternatively, one can specify
%%   \graphicspath{{<path to file>/}}
%% 
%% For more information, please see info/svg-inkscape on CTAN:
%%   http://tug.ctan.org/tex-archive/info/svg-inkscape
%%
\begingroup%
  \makeatletter%
  \providecommand\color[2][]{%
    \errmessage{(Inkscape) Color is used for the text in Inkscape, but the package 'color.sty' is not loaded}%
    \renewcommand\color[2][]{}%
  }%
  \providecommand\transparent[1]{%
    \errmessage{(Inkscape) Transparency is used (non-zero) for the text in Inkscape, but the package 'transparent.sty' is not loaded}%
    \renewcommand\transparent[1]{}%
  }%
  \providecommand\rotatebox[2]{#2}%
  \newcommand*\fsize{\dimexpr\f@size pt\relax}%
  \newcommand*\lineheight[1]{\fontsize{\fsize}{#1\fsize}\selectfont}%
  \ifx\svgwidth\undefined%
    \setlength{\unitlength}{683.91813023bp}%
    \ifx\svgscale\undefined%
      \relax%
    \else%
      \setlength{\unitlength}{\unitlength * \real{\svgscale}}%
    \fi%
  \else%
    \setlength{\unitlength}{\svgwidth}%
  \fi%
  \global\let\svgwidth\undefined%
  \global\let\svgscale\undefined%
  \makeatother%
  \begin{picture}(1,0.40296364)%
    \lineheight{1}%
    \setlength\tabcolsep{0pt}%
    \put(0,0){\includegraphics[width=\unitlength,page=1]{hw.pdf}}%
    \put(0.31211723,0.36920823){\color[rgb]{0,0,0}\makebox(0,0)[lt]{\lineheight{0.625}\smash{\begin{tabular}[t]{l}$\mathcal{G}_1$\end{tabular}}}}%
    \put(0.606244,0.122369){\color[rgb]{0,0,0}\makebox(0,0)[lt]{\lineheight{0.62500006}\smash{\begin{tabular}[t]{l}$\mathcal{G}_3$\end{tabular}}}}%
    \put(0.00502481,0.12067067){\color[rgb]{0,0,0}\makebox(0,0)[lt]{\lineheight{0.62500006}\smash{\begin{tabular}[t]{l}$\mathcal{G}_2$\end{tabular}}}}%
    \put(0.3674166,0.0221657){\color[rgb]{0,0,0}\makebox(0,0)[lt]{\lineheight{0.62500006}\smash{\begin{tabular}[t]{l}$\mathcal{G}_4$\end{tabular}}}}%
    \put(0,0){\includegraphics[width=\unitlength,page=2]{hw.pdf}}%
  \end{picture}%
\endgroup%
	
	%\vspace{-5.8cm}
    \caption{A $2f$-robust network in which vehicles fail to reach coordinated movement with partial asynchrony and delayed information (each vehicle in this figure is representative for a set of $f$ vehicles that are strongly connected). Note that the focus of this paper is on the longitudinal motion of vehicles.} \label{fig: AsyncProp} 
\end{figure}

This graph is composed of four subgraphs $\mathcal{G}_i$, $i=1,\ldots,4$, and each of them is a complete graph -- each vehicle in this figure is representative for a set of $f$ vehicles that are strongly connected. The graph $\mathcal{G}_1$ consists of $4f$ vehicles and the rest have $f$ vehicles. Each vehicle in $\mathcal{G}_4$ has incoming links from $2f$ vehicles of $\mathcal{G}_1$. 
Every vehicle in $\mathcal{G}_3$ has $f$ links from $\mathcal{G}_1$ and $f$ links from $\mathcal{G}_4$. Likewise, each vehicle of $\mathcal{G}_2$ has $f$ link from  $\mathcal{G}_1$ and $f$ incoming links from $\mathcal{G}_4$.

Note that the minimum degree for a $2f$-robust graph is $2f$. However for this graph, the minimum degree of 
the vehicles is $2f+1$ or greater. This is an important point for the following reason. 
If a normal vehicle has only $2f$ neighbors, it will keep its current state since it might ignore all of the values received from its neighbors under the ADP-MSR algorithm. It is clear that coordinated movement cannot take place if this happens for more than two vehicles in the network. The following proposition formally states our claim.
\begin{proposition}\label{Prop: Asynch2frobust}\rm
There exists a 2$f$-robust network with the minimum degree $2f+1$ under which normal vehicles might not reach resilient consensus by the ADP-MSR algorithm. 
\end{proposition}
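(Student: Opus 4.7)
The plan is to establish the proposition by exhibiting the concrete counterexample depicted in Fig.~\ref{fig: AsyncProp}, together with a specific placement of malicious vehicles, a specific initial configuration, and a compatible delay/asynchrony schedule under which Algorithm~\ref{alg1} fails to drive the normal vehicles to resilient coordinated movement. Because the statement is existential, one faithful construction suffices. Before analyzing dynamics, I would verify the two structural claims on $\mathcal{G}$: the minimum in-degree can be read directly off the description (every vehicle in $\mathcal{G}_1$ has $4f-1$ internal neighbors; every vehicle in $\mathcal{G}_4$ has $f-1$ internal neighbors plus $2f$ incoming from $\mathcal{G}_1$; every vehicle in $\mathcal{G}_2$ or $\mathcal{G}_3$ has $f-1$ internal neighbors plus $f$ from $\mathcal{G}_1$ and $f$ from $\mathcal{G}_4$), which is at least $2f+1$ in every case; the $2f$-robustness follows from Definition~\ref{(def:robustgraph)} by a case analysis on disjoint nonempty $(\mathcal{V}_1,\mathcal{V}_2)$, where the completeness of each $\mathcal{G}_i$ together with the $2f$ cross-edges from $\mathcal{G}_1$ to $\mathcal{G}_4$ guarantees condition~1, 2, or 3 (with $r=2f$, $s=1$).

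Next I would build the failure scenario. Declare the $f$ malicious vehicles to sit inside $\mathcal{G}_1$ and to be precisely those whose outgoing edges reach every receiver in $\mathcal{G}_2\cup\mathcal{G}_3$. Set $p_i[0]=a$ for all $i\in\mathcal{G}_2$, $p_i[0]=b$ for all $i\in\mathcal{G}_3$ with $a<b$, and $q_i[0]=0$ throughout; initialize the normal vehicles in $\mathcal{G}_1$ and $\mathcal{G}_4$ so that the values ultimately perceived by each receiver in $\mathcal{G}_2\cup\mathcal{G}_3$ from its $\mathcal{G}_1$-neighbors lie strictly above $b$ and those from its $\mathcal{G}_4$-neighbors lie strictly below $a$. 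Under this data every receiver in $\mathcal{G}_2\cup\mathcal{G}_3$ has exactly $2f$ cross-cluster incoming edges and $f-1$ intra-cluster incoming edges, with the cross-cluster relative-position offsets split into $f$ strictly positive and $f$ strictly negative values, while the intra-cluster offsets are all zero. Tracing Algorithm~\ref{alg1} on the sorted list of offsets, the ``$f$ largest'' step removes precisely the $f$ $\mathcal{G}_1$-contributions and the ``$f$ smallest'' step removes precisely the $f$ $\mathcal{G}_4$-contributions. What remains is a collection of zero offsets coming from vehicles in the same cluster, so the control input \eqref{eq.pqupdate} reduces to $u_i[k]=0$; together with $q_i[0]=0$, the dynamic \eqref{eq.mdyn} leaves $(p_i,q_i)$ fixed. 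Thus $\mathcal{G}_2$ stays at $a$ and $\mathcal{G}_3$ at $b$ forever, contradicting Definition~\ref{def.velcons}.

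The main obstacle is sustaining, inductively in $k$, the invariant that the values delivered to $\mathcal{G}_2\cup\mathcal{G}_3$ from $\mathcal{G}_1$ exceed $b$ and those from $\mathcal{G}_4$ fall below $a$. The $\mathcal{G}_1$ side is automatic: by omniscience the $f$ malicious vehicles broadcast an arbitrarily large constant at every step, and by our choice of malicious placement they are the only $\mathcal{G}_1$-neighbors of each receiver in $\mathcal{G}_2\cup\mathcal{G}_3$. The delicate part is keeping $\mathcal{G}_4$ strictly below $a$ despite its $2f$ incoming edges from $\mathcal{G}_1$, half of which carry the malicious high broadcasts. Here I would exploit the bounded delays \eqref{eqn:delay_tau} and partial asynchrony: initialize the $\mathcal{G}_1$ normal vehicles incident to $\mathcal{G}_4$ well below $a$ and arrange, via an adversarial choice of update times and delays, that each $\mathcal{G}_4$ vehicle, when it updates, sees $f$ large malicious values and $f$ sufficiently small normal values from $\mathcal{G}_1$. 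Applying Algorithm~\ref{alg1} inside $\mathcal{G}_4$ then removes the top-$f$ (malicious) offsets and the bottom-$f$ offsets together with the intra-$\mathcal{G}_4$ contributions, yielding $u_i[k]=0$ for every normal $i\in\mathcal{G}_4$, so these vehicles remain at their low initial values. The same kind of scheduling keeps the relevant $\mathcal{G}_1$ normal vehicles at their low initial values as well. Once this invariant is maintained for all $k$, the argument of the previous paragraph applies unchanged and the proposition follows; the very fact that such a delicate but legal schedule exists in the asynchronous delayed setting is precisely what opens the gap between the sufficient $(2f+1)$-robustness and the necessary $(f+1,f+1)$-robustness in Theorem~\ref{th.sufnec}.
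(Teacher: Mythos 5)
Your overall strategy --- exhibit the graph of Fig.~\ref{fig: AsyncProp}, choose malicious vehicles, initial data and a schedule, and show every normal vehicle's filtered neighbourhood collapses to zero offsets so that \eqref{eq.pqupdate} gives $u_i\equiv 0$ --- is the right shape, but your construction departs from the paper's in a way that introduces two genuine gaps. First, the placement of the adversaries. You put the $f$ malicious vehicles inside $\mathcal{G}_1$ and require them to be \emph{simultaneously} the $f$ $\mathcal{G}_1$-sources of every vehicle in $\mathcal{G}_2$, the $f$ $\mathcal{G}_1$-sources of every vehicle in $\mathcal{G}_3$, and $f$ of the $2f$ sources of $\mathcal{G}_4$. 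The graph is only specified as ``each vehicle of $\mathcal{G}_2$ ($\mathcal{G}_3$) has $f$ links from $\mathcal{G}_1$''; nothing forces those two source sets to coincide, and in the figure they come from different representative blocks of $\mathcal{G}_1$. If they are disjoint your attack needs $2f$ malicious vehicles and violates the $f$-total model; if you redesign the graph so they coincide, you must re-establish $2f$-robustness and the degree bound, which you only assert. The paper sidesteps this entirely: it makes the malicious set the \emph{one} cluster that already feeds both target clusters, and --- since a single truthful cluster cannot sit above one target and below the other at the same time --- has it oscillate between $a-\delta_{i\ell}/2$ and $b+\delta_{i\ell}/2$ while choosing the delays $\tau_{ij}[k]\in\{0,1\}$ so that it is perceived as constantly low by $\mathcal{G}_3$ and constantly high by the other target. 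That delay trick is the actual content of the example; in your construction the delays and update times do no work (your final claim that the schedule ``opens the gap'' is not supported by anything you build).

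Second, the invariant you call ``delicate'' is exactly the step you do not prove: keeping the normal vehicles of $\mathcal{G}_1$ (and hence $\mathcal{G}_4$) frozen below $a$ for all $k$. The $3f$ normal members of the complete graph $\mathcal{G}_1$ run ADP-MSR among themselves with at most $f$ malicious in-neighbours; if you only pin the ones incident to $\mathcal{G}_4$ ``well below $a$'' and leave the remaining $2f$ normal members at other values, each low vehicle retains, after discarding its $f$ largest and $f$ smallest offsets, at least $2f-1$ in-neighbours, some with strictly positive offsets, so $u_i>0$ and it drifts upward --- no bounded-delay schedule can hide a neighbour's value for more than $\tau$ steps, so ``the same kind of scheduling'' does not rescue this. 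The repair (initialize \emph{all} normal $\mathcal{G}_1$ vehicles at one common value) is easy but must be stated, and it makes your attack purely synchronous; by the necessity half of Theorem~\ref{th.sufnec} a synchronous failure would certify that the graph is not even $(f+1,f+1)$-robust, which is at odds with the premise the paper attaches to this example and with the point the proposition is meant to illustrate, namely that asynchrony and delays are what defeat a $2f$-robust network satisfying the necessary condition.
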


\begin{proof}
We suppose that the network in Fig.~\ref{fig: AsyncProp} is both $2f$-robust and $(f+1,f+1)$-robust, but resilient coordinated movement cannot be achieved under the ADP-MSR. 
Assume that all vehicles in $\mathcal{G}_2$ are malicious. 
We show a scenario in which 
by the ADP-MSR algorithm, the position values of the vehicles in $\mathcal{G}_3$ and $\mathcal{G}_4$ never concur.

Note that $\mathcal{G}_1$ is 2$f$-robust because of Lemma \ref{lemma:robust graphs} (v) (see \ref{sect: appendix}). By (iv) of this lemma, the graph obtained by adding $\mathcal{G}_2$ is still 2$f$-robust, since there are 2$f$ edges from $\mathcal{G}_1$. 
Similarly, adding $\mathcal{G}_3$ and $\mathcal{G}_4$ and the 
required edges based on (iv) of Lemma \ref{lemma:robust graphs} 
also keeps the graph to be 2$f$-robust. 

We assume vehicles have the following initial states for $k=0$ and the prior $\tau$ steps:
\begin{align*}
&p_i[0]=a-\frac{\delta_{i\ell}}{2}, q_i[0]=0, \forall i \in \mathcal{V}_3, \\
&p_\ell[0]=b+\frac{\delta_{i\ell}}{2}, q_\ell[0]=0, \forall \ell \in \mathcal{V}_2,\\
&p_h[0]=c, q_h[0]=0, \forall h \in \mathcal{V}_1,
\end{align*}
where $a-\frac{\delta_{i\ell}}{2} < c < b+\frac{\delta_{i\ell}}{2}$. 

Also, the malicious vehicles set the following values as their states:
\begin{align*}
p_j[2m]=a-\frac{\delta_{i\ell}}{2}, p_j[2m+1]=b+\frac{\delta_{i\ell}}{2}, \forall j \in \mathcal{V}_4
\end{align*}
and the time delays are chosen by the following scenario: 
\begin{align*}
&\tau_{ij}[2m+1]=1, \forall j \in \mathcal{V}_2, i \in \mathcal{V}_3, (j,i) \in \mathcal{E}, \\
&\tau_{\ell j}[2m+1]=0, \forall j \in \mathcal{V}_2, \ell \in \mathcal{V}_2, (j,\ell) \in \mathcal{E}, \\
&\tau_{ij}[2m]=0,\\
&\tau_{\ell j}[2m]=1,
\end{align*}
where $m \in \mathbb{Z}_+$. All other links have no delay. 
Then, to the vehicles in $\mathcal{G}_3$,
the malicious vehicles appear to be stationary at the state value $a-\frac{\delta_{i\ell}}{2}$ and to the vehicles in $\mathcal{G}_4$ at the state value $b+\frac{\delta_{i\ell}}{2}$.

By executing the ADP-MSR at $k=0$, the vehicles in $\mathcal{G}_3$ will remove the position values of all neighbors 
in $\mathcal{G}_1$ since $a-\frac{\delta_{i\ell}}{2} < c$. 
Thus, for $i\in\mathcal{V}_3$, $p_i[1]=a-\frac{\delta_{i\ell}}{2}$. 
At $k=1$, the same happens for the vehicles $\ell \in \mathcal{V}_4$ 
and they stay at $b+\frac{\delta_{i\ell}}{2}$. Since the vehicles in $\mathcal{G}_3$ are not affected by any vehicles with state values larger than $a-\frac{\delta_{i\ell}}{2}$, they remain at their state value for all future steps. The same holds among the normal vehicles in the network, therefore, $p_i[k]=a-\frac{\delta_{i\ell}}{2}$ and $p_\ell[k]=b+\frac{\delta_{i\ell}}{2}$ for all $i \in \mathcal{V}_3$ and $\ell \in \mathcal{V}_4$. This shows failure in agreement of all vehicles as $\lim_{k \rightarrow \infty } p_\ell[k]-p_i[k]=\delta_{i\ell}+(b-a) \neq \delta_{i\ell}$.
\end{proof}

\subsection{Further Discussions and Results}

Here, we provide some extensions to the discussions and results so far proposed in the paper. 

First, note that only the number of malicious vehicles in each normal vehicle's neighborhood plays a role in the proof of Theorem~\ref{th.sufnec}. Therefore, the result is also valid for the $f$-local malicious model leading to the following corollary.

\begin{corollary}\label{Corollory: f-localAsycnronousSecond-Order}\rm
Consider the network of vehicles with second-order dynamics using the control input proposed in \eqref{eq.pqupdate} and the ADP-MSR algorithm. The network achieves resilient coordinated movement under the $f$-local malicious model if the underlying graph is $(2f+1)$-robust. 
\end{corollary}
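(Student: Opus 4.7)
The plan is to observe that the sufficiency direction of Theorem \ref{th.sufnec} already carries through verbatim under the $f$-local model, since every ingredient of that proof is local to each normal vehicle. Concretely, I would first point out that when ADP-MSR is executed at a normal vehicle $i$, it discards at most $f$ relative-position values from above and $f$ from below, regardless of how malicious vehicles are distributed elsewhere in the network. The $f$-local assumption $|\mathcal{N}_i \cap \mathcal{M}| \le f$ then guarantees exactly the property that the original argument used: any malicious neighbor of $i$ whose reported value is extreme is filtered out, and any unfiltered malicious neighbor reports a value that is sandwiched between values coming from normal neighbors of $i$. Thus the update \eqref{eq.pqupdate} restricted to the surviving edges can be written as a convex combination of (delayed) relative positions of normal vehicles plus the velocity damping term $-\alpha_i q_i[k]$, with all weights uniformly positive when they are nonzero.

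Next, I would invoke $(2f+1)$-robustness in the same way as in the sufficiency part of Theorem \ref{th.sufnec}. For any pair of nonempty disjoint subsets $\mathcal{V}_1,\mathcal{V}_2$ of the normal set $\mathcal{V}\setminus\mathcal{M}$, robustness of order $2f+1$ supplies a normal vehicle in $\mathcal{V}_1\cup\mathcal{V}_2$ with at least $2f+1$ incoming neighbors from outside its own subset. Because at most $f$ of those are malicious (by the $f$-local bound) and at most $f$ more are discarded by ADP-MSR as extreme, at least one normal neighbor from outside the subset always contributes to this vehicle's update with a weight bounded below by a fixed constant. This is exactly the shrinking step that, propagated through the standard max/min envelope argument for partially asynchronous MSR updates with bounded delay $\tau$, yields exponential contraction of the spread of $p_i[k]$ over the normal vehicles.

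Combining the spatial contraction with the velocity-damping term $-\alpha_i q_i[k]$ in \eqref{eq.pqupdate} gives $p_j[k]-p_i[k]\to \delta_{ij}$ and $q_i[k]\to 0$ for all normal $i,j$, exactly as in Theorem \ref{th.sufnec}. Undoing the change of variables \eqref{eq.varch} then recovers the resilient coordinated movement of Definition \ref{def.velcons} for the $f$-local case. I do not anticipate a genuinely new obstacle beyond those already handled in Theorem \ref{th.sufnec}; the only point requiring care is verifying that every step in the sufficiency proof really depends only on the local count $|\mathcal{N}_i\cap\mathcal{M}|\le f$ rather than on the global count $|\mathcal{M}|\le f$, so that substituting the $f$-local hypothesis for the $f$-total one leaves the argument intact. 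Accordingly, the proof can be written essentially as a one-paragraph reduction to Theorem \ref{th.sufnec}.
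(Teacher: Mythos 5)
Your proposal is correct and matches the paper's own argument: the paper proves this corollary precisely by observing that the sufficiency proof of Theorem~\ref{th.sufnec} (inherited from the MSR convergence analysis) only ever uses the bound $\abs{\mathcal{N}_i \cap \mathcal{M}} \leq f$ at each normal vehicle, so the $f$-total hypothesis can be replaced by the $f$-local one without change. Your additional detail on the filtering/sandwiching and the $(2f+1)$-robustness counting is a faithful expansion of that same reduction rather than a different route.
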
 
Note that the results in this paper are all valid for the second-order networks whose underlying graphs are fixed, i.e. with time-invariant $\mathcal{E}$. In \cite{LeBlancPaper}, for the first-order synchronous vehicle networks, there is a natural extension for the time-varying $\mathcal{G}[k]$ and based on that $\mathcal{G}[k]=(\mathcal{V},\mathcal{E}[k])$ is enough to be $(f+1,f+1)$-robust at each time $k$. The same condition is valid here, again for second-order synchronous networks. However, the assumption on robustness of the graph \textit{at each time $k$} is quite conservative and might bring difficulties in practice. Here, we would like to state a new relaxed condition for the partially asynchronous time-varying networks. The following definition has a key role for this purpose:
\begin{definition}  \rm
(Jointly $r$-robust) The time-varying graph $\mathcal{G}[k]=(\mathcal{V},\mathcal{E}[k])$ is jointly $r$-robust if there exists a fixed $\ell$ such that the union of $\mathcal{G}[k]$ over each consecutive $\ell$ steps is $r$-robust.
\end{definition}

In a time-varying network, each normal vehicle $i$ can use the outdated links from $\tau$ time steps back whenever some information is not available. Thus, the sufficient condition is obtained with the following additional assumptions:
\begin{equation}\label{As: assumption}
\ell \leq \tau.
\end{equation} 
By the above discussions, the sufficient condition is presented as below.
\begin{corollary}\label{Cor: Time-varying}\rm
Under the $f$-total/$f$-local malicious model, 
the time-varying network of vehicles with second-order dynamics utilizing control input \eqref{eq.pqupdate} and ADP-MSR algorithm achieves resilient coordinated movement, if the underlying graph is jointly $(2f+1)$-robust under condition \eqref{As: assumption}.	
\end{corollary}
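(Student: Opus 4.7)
My strategy is to reduce Corollary~\ref{Cor: Time-varying} to the fixed-graph sufficiency part of Theorem~\ref{th.sufnec} by showing that, from the vantage point of the MSR filter, each window of $\ell$ consecutive time steps behaves like a single step on a static $(2f+1)$-robust graph. I will work in the transformed coordinates $(p_i,q_i)$ introduced in \eqref{eq.varch}, so that the target becomes $p_j[k]-p_i[k]\to\delta_{ij}$ and $q_i[k]\to 0$ for all normal $i,j$, which, via \eqref{eq.varch}, is equivalent to resilient coordinated movement as in Definition~\ref{def.velcons}.

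The first step is to set up the standard Lyapunov-like quantities used in MSR analysis: for the window $[k-\tau,k]$, define
\begin{equation*}
M[k]=\max_{i\in\mathcal{V}\setminus\mathcal{M}}\ \max_{s\in[k-\tau,k]}\bigl(p_i[s]-\delta_{i i_0}\bigr),\quad m[k]=\min_{i\in\mathcal{V}\setminus\mathcal{M}}\ \min_{s\in[k-\tau,k]}\bigl(p_i[s]-\delta_{i i_0}\bigr),
\end{equation*}
for a fixed reference node $i_0$, together with an analogous pair for the velocities $q_i$. As in Theorem~\ref{th.sufnec}, monotonicity of $M[k]$ and $m[k]$ follows directly from the fact that, after the ADP-MSR filter removes up to $f$ top and $f$ bottom values, any weight used by a normal vehicle in \eqref{eq.pqupdate} is applied to a value that lies in $[m[k],M[k]]$.

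The core step, and the place the time-varying hypothesis is used, is to establish a strict contraction $M[k+\ell]-m[k+\ell]\le \rho\,(M[k]-m[k])$ for some $\rho<1$. Here I invoke joint $(2f+1)$-robustness: the union graph $\mathcal{G}_{[k,k+\ell]}=(\mathcal{V},\bigcup_{s=k}^{k+\ell}\mathcal{E}[s])$ is $(2f+1)$-robust. Partition the normal nodes by proximity of their values to $M[k]$ vs.\ $m[k]$ (as in the standard LeBlanc/Dibaji argument): there must exist a normal node, say in the ``top'' set, with at least $2f+1$ incoming neighbors in $\mathcal{G}_{[k,k+\ell]}$ from outside that set. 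The condition $\ell\le\tau$ from \eqref{As: assumption} is precisely what guarantees that for every such incoming edge $(j,i)\in\mathcal{E}[s]$ with $s\in[k,k+\ell]$, the value $p_j[s-\tau_{ij}[s]]$ is still within the delay memory that node $i$ legitimately uses in some update inside the window, so the filter cannot discard all of them: after removing at most $2f$ extremes, at least one cross-partition edge survives at some time in the window, injecting a value bounded away from $M[k]$. Quantifying this with a uniform lower bound $\eta>0$ on adjacency weights and step sizes gives $\rho=1-\eta^{\ell}<1$.

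The main obstacle will be making the bookkeeping between ``edges that appear in the window'' and ``information actually available to the filter'' rigorous when the topology is changing. In a fixed graph each normal node sees every neighbor at least once per $\tau$ steps by \eqref{eqn:delay_tau}; in the time-varying case an edge in the union $\mathcal{G}_{[k,k+\ell]}$ need only be present at one instant in the window, and I have to argue that whichever instant that is, the ADP-MSR executed at that instant does incorporate (or at least cannot filter away all) the corresponding packet. The assumption $\ell\le\tau$ is the structural reason this works: it prevents the union window from outrunning the delay memory, so every cross-partition link surviving after filtering generates a genuine reduction in $M[k]-m[k]$. Once this contraction is obtained for $p_i$, the velocity contraction $q_i[k]\to 0$ follows from the same argument applied to the second block of \eqref{eq.mdyn} using the $-\alpha_i q_i[k]$ damping, exactly as in Theorem~\ref{th.sufnec}, and iterating over windows of length $\ell$ yields exponential convergence. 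The $f$-local version is automatic, since only the count of malicious neighbors of each normal node enters the argument.
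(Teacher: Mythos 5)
The paper does not re-derive the contraction for this corollary; its argument is a one-line reduction: under \eqref{As: assumption} each normal vehicle ``can use the outdated links from $\tau$ time steps back whenever some information is not available,'' so every in-neighbor of $i$ in the union graph $\bigcup_{s}\mathcal{E}[s]$ over the last $\ell\le\tau$ steps has a value in $i$'s buffer that is at most $\tau+\ell$ old at \emph{every} update instant. The time-varying network therefore behaves exactly like a fixed $(2f+1)$-robust graph with a larger (still uniformly bounded) delay, and Theorem~\ref{th.sufnec} applies verbatim. Your proposal instead redoes the MSR contraction from scratch on windows of length $\ell$, which is a legitimately different route, but it contains a gap precisely at the step you flag as ``the main obstacle.''

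The gap: joint $(2f+1)$-robustness gives you a normal node $i$ in the top set with $2f+1$ in-neighbors outside that set \emph{in the union graph over the window}, hence at least $f+1$ normal outside in-neighbors whose values are bounded away from $M[k]$. The standard counting (``the bottom filter removes at most $f$ values, so one of these $f+1$ survives'') only works if all $f+1$ of those values are presented to a \emph{single} execution of the filter. In your per-instant reading, these $f+1$ cross-neighbors may each be active at a different instant $s\in[k,k+\ell]$; at each such instant the filter is applied only to the neighbors present at that instant, the $2f$-removal budget resets, and an omniscient adversary can arrange (e.g., by where the $f$ malicious neighbors place their broadcast values at that instant) for the lone small cross value to fall among the $f$ smallest and be discarded every time. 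So ``after removing at most $2f$ extremes, at least one cross-partition edge survives at some time in the window'' does not follow from the counting you give. The repair is exactly the mechanism the paper invokes: because $\ell\le\tau$, the packet delivered on an edge active at time $s$ remains within the admissible delay memory for the rest of the window, and the update rule is applied to the \emph{full} union-graph in-neighborhood (with stale values substituted for currently inactive links), so the filter does see all $2f+1$ outside values simultaneously and the fixed-graph argument of Theorem~\ref{th.sufnec} goes through. Two smaller points: your $M[k],m[k]$ with offsets $\delta_{ii_0}$ implicitly require the $\delta_{ij}$ to be consistent (derivable from absolute offsets), which should be stated; and the strict contraction is obtained over windows of length on the order of $(n-1)(\tau+1)$, not a single $\ell$-window, since the decrease must propagate through the whole normal set before $M-m$ shrinks by a fixed factor.
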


\begin{figure}[t]
	%\vspace{-2.6cm}
	\def \svgscale{.45}
	\hspace{.8cm} % for two columns
	% \hspace{2.5cm}
	%% Creator: Inkscape inkscape 0.92.4, www.inkscape.org
%% PDF/EPS/PS + LaTeX output extension by Johan Engelen, 2010
%% Accompanies image file '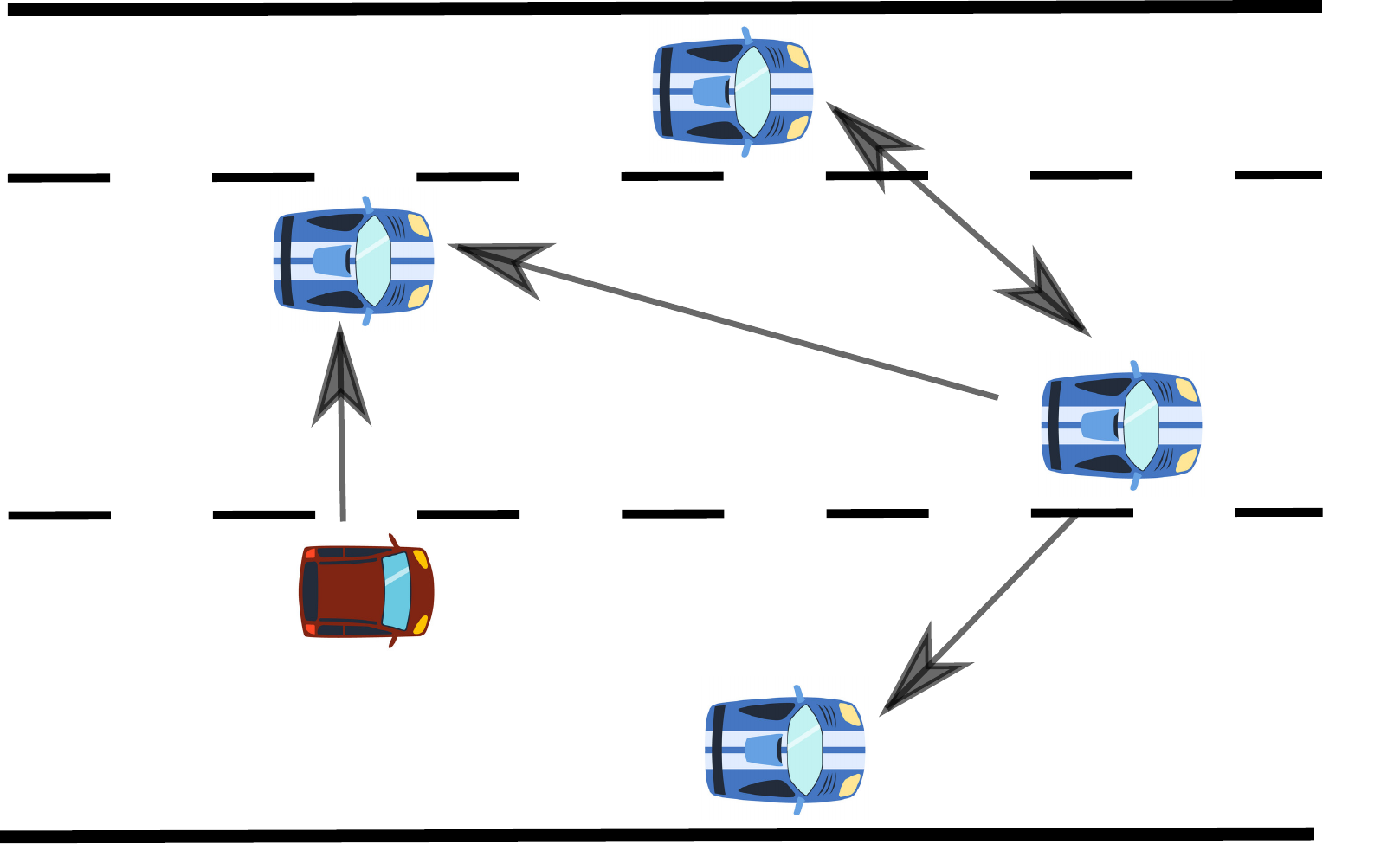' (pdf, eps, ps)
%%
%% To include the image in your LaTeX document, write
%%   \input{<filename>.pdf_tex}
%%  instead of
%%   \includegraphics{<filename>.pdf}
%% To scale the image, write
%%   \def\svgwidth{<desired width>}
%%   \input{<filename>.pdf_tex}
%%  instead of
%%   \includegraphics[width=<desired width>]{<filename>.pdf}
%%
%% Images with a different path to the parent latex file can
%% be accessed with the `import' package (which may need to be
%% installed) using
%%   \usepackage{import}
%% in the preamble, and then including the image with
%%   \import{<path to file>}{<filename>.pdf_tex}
%% Alternatively, one can specify
%%   \graphicspath{{<path to file>/}}
%% 
%% For more information, please see info/svg-inkscape on CTAN:
%%   http://tug.ctan.org/tex-archive/info/svg-inkscape
%%
\begingroup%
  \makeatletter%
  \providecommand\color[2][]{%
    \errmessage{(Inkscape) Color is used for the text in Inkscape, but the package 'color.sty' is not loaded}%
    \renewcommand\color[2][]{}%
  }%
  \providecommand\transparent[1]{%
    \errmessage{(Inkscape) Transparency is used (non-zero) for the text in Inkscape, but the package 'transparent.sty' is not loaded}%
    \renewcommand\transparent[1]{}%
  }%
  \providecommand\rotatebox[2]{#2}%
  \newcommand*\fsize{\dimexpr\f@size pt\relax}%
  \newcommand*\lineheight[1]{\fontsize{\fsize}{#1\fsize}\selectfont}%
  \ifx\svgwidth\undefined%
    \setlength{\unitlength}{465.64941887bp}%
    \ifx\svgscale\undefined%
      \relax%
    \else%
      \setlength{\unitlength}{\unitlength * \real{\svgscale}}%
    \fi%
  \else%
    \setlength{\unitlength}{\svgwidth}%
  \fi%
  \global\let\svgwidth\undefined%
  \global\let\svgscale\undefined%
  \makeatother%
  \begin{picture}(1,0.60254203)%
    \lineheight{1}%
    \setlength\tabcolsep{0pt}%
    \put(0,0){\includegraphics[width=\unitlength,page=1]{hwsim.pdf}}%
    \put(0.60186277,0.53549789){\color[rgb]{0,0,0}\makebox(0,0)[lt]{\lineheight{0.625}\smash{\begin{tabular}[t]{l}$1$\end{tabular}}}}%
    \put(0,0){\includegraphics[width=\unitlength,page=2]{hwsim.pdf}}%
    \put(0.87502345,0.27594466){\color[rgb]{0,0,0}\makebox(0,0)[lt]{\lineheight{0.62500006}\smash{\begin{tabular}[t]{l}$2$\end{tabular}}}}%
    \put(0.64834227,0.04572835){\color[rgb]{0,0,0}\makebox(0,0)[lt]{\lineheight{0.62500006}\smash{\begin{tabular}[t]{l}$3$\end{tabular}}}}%
    \put(0.15701827,0.15901943){\color[rgb]{0,0,0}\makebox(0,0)[lt]{\lineheight{0.62500006}\smash{\begin{tabular}[t]{l}$4$\end{tabular}}}}%
    \put(0.14745342,0.39817268){\color[rgb]{0,0,0}\makebox(0,0)[lt]{\lineheight{0.62500006}\smash{\begin{tabular}[t]{l}$5$\end{tabular}}}}%
    \put(0,0){\includegraphics[width=\unitlength,page=3]{hwsim.pdf}}%
  \end{picture}%
\endgroup%

	%\vspace{-10.5cm}
    \caption{A graph that is  $(2,2)$-robust but not $3$-robust.} \label{fig:robustgraph} 
\end{figure}

Now, we discuss the relation between the graph properties proposed here for the resilient coordinated movement problem and those used in standard consensus problems without any malicious agents \cite{CaoMorseAnderson,JiahuQin2012,JiahuHirche,Xiao2006}.
In this paper, it is assumed that the number of 
adversarial vehicles is upper bounded by $f$. 
By removing all edges connected
to malicious vehicles, we can obtain a subgraph of $\mathcal{G}$ consisting of only the normal vehicles. 
By (vi) of Lemma~\ref{lemma:robust graphs} (see \ref{sect: appendix}), 
this network becomes $(1,f+1)$-robust. 
Now, the obtained graph has a spanning tree according to (iv) of 
the same lemma. We know that consensus can be achieved under such a graph. It is also interesting that the sufficient condition in Corollary \ref{Cor: Time-varying} is consistent with the consensus condition on time-varying networks known as having jointly spanning tree \cite{Boyd2005}.

It is further noted that \cite{Azadmanesh2002,
khanafer,Azadmanesh1993,LeBlancPaper,plunkett,Vaidya} consider the so-called omissive faults, where malicious nodes \textit{can deny} making any transmissions. Therefore, the normal vehicle $i$ would wait to receive the position values of at least $d_i-f$ neighbors before making an update.
It should be noted that omissive faults can also be tolerated 
by the MSR-type algorithms. The malicious vehicles knowing that 
the normal vehicles apply the ADP-MSR algorithm
might attempt to make this kind of attack to 
cause denial of information for filtering the received values in Algorithm~\ref{alg1}. In such cases, if vehicle $i$ does not receive the data packets from $m_i[k]$ incoming neighbors at time $k$, then the parameter of the ADP-MSR for that vehicle can be changed from $2f$ to $2(f-m_i[k])$ assuming that vehicle $i$ is aware of $d_i[k]$. The topology analysis remains mostly the same.

Besides, the ADP-MSR algorithms for $f$-total malicious models are resilient against another type of adversaries 
studied in \cite{Feng}. There, the adversarial agents can extend the network by adding extra links. However, this does not change the value of $f$ in the network. For example, in our problem, this can happen in a highway when some additional vehicles are passing by the connected vehicles network. However, note that the situation is subtly different in the case with the $f$-local model. Adding an extra link might increase the number of malicious vehicles in a neighborhood of some normal vehicles. Accordingly, the vehicles must know which links are newly created so as to remove them along with the edges ignored in the ADP-MSR algorithm.

\section{Numerical Example}\label{sect: simulations}

\begin{figure}[t] 
	\begin{center}
	\includegraphics[scale=0.55]{}
	\caption{Position-time history of vehicles under $(2,2)$-robust graph - coordinated movement failed.}
	\label{fig.xfail}
	\vspace{-4mm}
	\end{center}
\end{figure}

\begin{figure}[t] 
	\begin{center}
	\includegraphics[scale=0.55]{}
	\caption{Velocity-time history of vehicles under $(2,2)$-robust graph - coordinated movement failed.}
	\label{fig.vfail}
	\vspace{-4mm}
	\end{center}
\end{figure}

\begin{figure}[t] 
	\begin{center}
	\includegraphics[scale=0.55]{}
	\caption{Position-time history of vehicles under $3$-robust graph - coordinated movement succeeded.}
	\label{fig.xagree}
	\vspace{-4mm}
	\end{center}
\end{figure}

\begin{figure}[t] 
	\begin{center}
	\includegraphics[scale=0.55]{}
	\caption{Velocity-time history of vehicles under $3$-robust graph - coordinated movement succeeded.}
	\label{fig.vagree}
	\vspace{-4mm}
	\end{center}
\end{figure}

Suppose a network of vehicles connected together on the network illustrated in Fig.~\ref{fig:robustgraph} with partially asynchronous delayed settings. This graph is $(2,2)$-robust (refer to \ref{sect: appendix} for more discussion). This network is considered to contain only one malicious vehicle, i.e. $f=1$. The sampling period is set to be $T=0.01$. Note that we set $\delta_{ij} = 0$ in this example, thus the vehicles come to consensus in parallel lanes with no relative longitudinal distances. Also, remember that we only consider longitudinal motion of the vehicles. Therefore, in this example, we assume that they move in parallel or each vehicle utilizes a CAS to overtake frontier vehicles if required. We consider two different scenarios to show the effectiveness of our method in the presence of active and passive sensing systems. In both scenarios, four normal vehicles periodically make updates within each 12 time steps with various timings. Specifically, vehicles 1, 2, 3, and 5 make updates at time steps $k=12\ell+6,12\ell+9,12\ell+11,12\ell+4$ for $\ell\in\mathbb{Z}_{+}$, respectively. We assume that at these time steps, their updates are made without any delays. However, each vehicle deals with nonuniform time-varying delays ($\tau=11$) since the normal vehicles do not receive new information at other time steps.

\textbf{Setting 1:} All the normal vehicles are assumed to have passive sensing systems, e.g. GPS receiver, for navigation. Thus, the malicious vehicle can misbehave them by easily sending incorrect information to them instead of its actual position and velocity, and is free to move in its own way. It can be even stopped somewhere on the road and broadcast its false information.

To simulate this scenario, the initial states of the vehicles are given by $\left[x^T[0]~v^T[0]\right]=\big[4~250~150~8~0~50~70~70~60~10\big]$. The parameters $\alpha_i$ in \eqref{eq.pqupdate} are evaluated as $\alpha_1 = \alpha_5 = 2$ and $\alpha_2 = \alpha_3 = 3$. The desired target velocity of the network of vehicles is set as $r=100$. In this network, the malicious vehicle~4 misguide the normal ones and divide them into multiple groups to prevent them from coming to a single agreement. To this end, vehicle~4 incorrectly send its positions as: $x_4[2k]=2+kTr$ and $x_4[2k+1]=200$ for all $k \geq 0$. Figs.~\ref{fig.xfail} and \ref{fig.vfail} illustrate the time history of the positions and velocities of the normal vehicles.
As expected, the positions of the normal vehicles do not reach consensus although the underlying network is $(2,2)$-robust, as a necessary condition. In fact, the ADP-MSR cannot stop the malicious vehicle from misguiding the normal vehicles. Fig.~\ref{fig.xfail} indicates that in fact vehicles are divided into two groups and move with a relative distance because of the malicious behavior of vehicle~4, sending the false data to the normal vehicles.
 
Next, we obtain a 3-robust graph (which is $2f+1$-robust in this case) by adding enough edges. As illustrated in Fig.~\ref{fig.xagree} and \ref{fig.vagree}, the same simulation with the complete graph with 5 nodes (the only $3$-robust graph with 5 nodes) verifies the sufficient condition of Theorem~\ref{th.sufnec} for the partially asynchronous setting.

\begin{figure}[t] 
	\begin{center}
	\includegraphics[scale=0.55]{}
	\caption{Position-time history of vehicles under $(2,2)$-robust graph - coordinated movement failed even when vehicles 2 and 3 have active sensing systems.}
	\label{fig.xfail.active}
	\vspace{-4mm}
	\end{center}
\end{figure}

\textbf{Setting 2:} Normal vehicles 2 and 3 are assumed to have active sensing systems and vehicles 1 and 5 are assumed to have passive sensing systems for navigation. As a result, the malicious vehicle faces a more challenging situation rather than the first setting\footnote{Note that more complicated situations in which actions and reactions of the normal and malicious vehicles are taken into account are out of scope of this paper. Game theoretic approaches will address them.}. If the malicious vehicle wants to misguide vehicles 2 and 3, it has to appear around them. However, as vehicles 1 and 5 each has a passive sensing system, the malicious vehicle can still use broadcasting false information to avoid them reaching consensus with the other two vehicles. As shown in Fig.~\ref{fig.xfail.active}, the malicious vehicle can affect the consensus and misguide the vehicles into two groups even when 2 of the vehicles have active sensing systems. On the other hand, as shown in Fig.~\ref{fig.xagree.active}, the sufficient graph condition, $(2f+1)$-robustness, guarantee the longitudinal coordinated movement of the vehicles. It is clear that the cyber attack for the malicious vehicle could be more difficult (yet possible) if all the vehicles are equipped with active sensing systems. Theoretically, the malicious vehicle cannot prevent consensus even if all the vehicles are equipped with active sensing systems. However, finding a practical scenario for this case would be tricky.

In this setting, the initial states of the vehicles are given by $\left[x^T[0]~v^T[0]\right]=\big[100~400~500~10~0~50~70~70~60~10\big]$. The parameters $\alpha_i$ in \eqref{eq.pqupdate} are evaluated as $\alpha_1 = \alpha_5 = 2$ and $\alpha_2 = \alpha_3 = 10$. Also, vehicle~4 incorrectly send its positions to vehicle 1 and 5 as: $x_4[2k+1]=200$ for all $k \geq 0$, while it moves closer to vehicle 2 and 3 most of the time and appears as: $x_4[k] = 0.1k+5 \sqrt{k}$.

\begin{figure}[t] 
	\begin{center}
	\includegraphics[scale=0.55]{}
	\caption{Position-time history of vehicles under $3$-robust graph - coordinated movement succeeded. Vehicles 2 and 3 have active sensing systems.}
	\label{fig.xagree.active}
	\vspace{-4mm}
	\end{center}
\end{figure}

\section{Conclusion}\label{sect: conclusion}
In this paper, we studied the problem of resilient coordinated movement of a network of connected vehicles with second-order longitudinal dynamics, where the number of malicious vehicles in the network is bounded by a parameter $f$, known to the vehicles. We have proposed a distributive strategy for the normal vehicles 
to achieve resilient consensus on their positions with a safe inter-vehicular distance and a predefined target velocity. The necessary and sufficient graph conditions are respectively $(f+1,f+1)$-robustness and $(2f+1)$-robustness for resilient coordinated movement of the network of vehicles under the $f$-total malicious model. Each vehicle performs the proposed update rule and ADP-MSR algorithm to achieve the agreement with an exponential convergence rate. Communications in the network are partially asynchronous with bounded delays.

Future research trend can be possibly investigations in the following two main directions: i) considering 2D coordinated movement of connected vehicles and develop appropriate updating algorithms, ii) finding a necessary and sufficient topology condition for the convergence.

\appendix
\section{} \label{sect: appendix}
To have a better understanding of $(r,s)$-robust graphs \cite{LeBlancPhD}, the following lemma is presented.
%and their relation to $r$-robust ones \cite{LeBlancPhD}. 

\begin{lemma}\label{lemma:robust graphs}\rm
For an $(r,s)$-robust graph $\mathcal{G}$, the followings hold:
\begin{enumerate}
\item[(i)] $\mathcal{G}$ is $(r',s')$-robust, where $0\leq r'\leq r$ and $1 \leq s'\leq s$, and in particular, it is $r$-robust.
\item[(ii)] $\mathcal{G}$ is $(r-1,s+1)$-robust.
\item[(iii)] $\mathcal{G}$ is at least $r$-connected, but an $r$-connected graph is not necessarily $r$-robust. 
\item[(iv)] $\mathcal{G}$ has a directed spanning tree.
\item[(v)] $r \leq \lceil n/2 \rceil$. Also, if $\mathcal{G}$ is a complete graph, 
then it is $(r',s)$-robust for all $0<r'\leq \lceil n/2 \rceil$ and $1 \leq s \leq n$.
\item[(vi)] The graph $\mathcal{G}'=( \mathcal{V} ,\mathcal{E}_{0} )$ is 
$(r-w,s)$-robust, when $\mathcal{G}'$ is formed by removing 
at most $w$ edges from neighbors of each node in $\mathcal{V}$, where $w < r$. 
\item[(vii)] The graph $\mathcal{G}'=( \mathcal{V} \cup  \{ v_{0} \}, \mathcal{E} 
\cup {\mathcal{E} _{0}} )$, where $v_{0}$ is a node added to 
$\mathcal{G}$ and $\mathcal{E} _{0}$ is the edge set related to 
$v_{0}$, is $r$-robust if ${d}_{v_{0}} \geq r+s-1$.  
\end{enumerate}
Moreover, a graph is $(r,s)$-robust 
if it is $(r+s-1)$-robust.
\end{lemma}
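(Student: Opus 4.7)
My plan is to treat Lemma~\ref{lemma:robust graphs} item by item, leaning throughout on Definition~\ref{(def:robustgraph)} and the elementary monotonicity $\mathcal{X}^{r'}_{\mathcal{V}_\ell}\supseteq\mathcal{X}^{r}_{\mathcal{V}_\ell}$ whenever $r'\le r$. For (i), I would fix disjoint nonempty $\mathcal{V}_1,\mathcal{V}_2$ and propagate whichever condition holds at parameters $(r,s)$: conditions~1 and~2 transfer because $\mathcal{X}^{r'}_{\mathcal{V}_\ell}\supseteq\mathcal{X}^{r}_{\mathcal{V}_\ell}=\mathcal{V}_\ell$ forces equality with $\mathcal{V}_\ell$, and condition~3 transfers because $|\mathcal{X}^{r'}_{\mathcal{V}_1}|+|\mathcal{X}^{r'}_{\mathcal{V}_2}|\ge|\mathcal{X}^{r}_{\mathcal{V}_1}|+|\mathcal{X}^{r}_{\mathcal{V}_2}|\ge s\ge s'$.

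For (ii) the same split applies, and conditions~1 and~2 again transfer by monotonicity; the delicate case is when only condition~3 holds at $(r,s)$ while both of conditions~1 and~2 fail at $(r-1,s+1)$. Here I would combine $\mathcal{X}^{r-1}\supseteq\mathcal{X}^{r}$ with an auxiliary application of $(r,s)$-robustness to the sub-pair $(\mathcal{V}_1\setminus\mathcal{X}^{r-1}_{\mathcal{V}_1},\mathcal{V}_2)$ (nonempty because condition~1 at $(r-1,s+1)$ fails) and symmetrically on the $\mathcal{V}_2$ side, in order to upgrade the bound $|\mathcal{X}^{r-1}_{\mathcal{V}_1}|+|\mathcal{X}^{r-1}_{\mathcal{V}_2}|\ge s$ to the strict $\ge s+1$ demanded by condition~3 at the relaxed parameters. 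For (iii), I would argue by contrapositive: a vertex cut $S$ of size at most $r-1$ together with two components of $\mathcal{G}\setminus S$ yields a pair $(\mathcal{V}_1,\mathcal{V}_2)$ in which every incoming edge from outside each $\mathcal{V}_\ell$ is routed through $S$, so $\mathcal{X}^{r}_{\mathcal{V}_\ell}=\emptyset$ for both $\ell$ and all three robustness conditions fail, contradicting $r$-robustness (available by (i)); the non-implication is witnessed by a standard example such as a long directed cycle, which is $2$-connected but not $2$-robust. For (iv), 1-robustness (available by (i)) forces a unique sink in the condensation of $\mathcal{G}$, which then serves as the root of a directed spanning tree.

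For (v), the upper bound $r\le\lceil n/2\rceil$ comes from applying $r$-robustness to a balanced disjoint pair of size $\lfloor n/2\rfloor$, since any node can receive at most $\lceil n/2\rceil$ incoming edges from outside its side; in the complete graph, a node inside a set of size $k$ has exactly $n-k\ge\lceil n/2\rceil$ incoming edges from outside whenever $k\le\lfloor n/2\rfloor$, so for every disjoint pair the smaller side automatically fulfills condition~1 or~2 at any $r'\le\lceil n/2\rceil$. Property (vi) is pure bookkeeping: each node loses at most $w$ incoming edges, so $\mathcal{X}^{r}_{\mathcal{V}_\ell}$ in $\mathcal{G}$ embeds into $\mathcal{X}^{r-w}_{\mathcal{V}_\ell}$ in $\mathcal{G}'$, and the three conditions carry over with the first parameter shifted by $w$. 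For (vii), I would case-split on the location of $v_0$: if $v_0\notin\mathcal{V}_1\cup\mathcal{V}_2$, the $r$-robustness of $\mathcal{G}$ on the same pair suffices because edges incident to $v_0$ do not enter the relevant incoming counts; if $v_0\in\mathcal{V}_1$, the degree bound $d_{v_0}\ge r+s-1$ combined with $(r,s)$-robustness of $\mathcal{G}$ applied to $(\mathcal{V}_1\setminus\{v_0\},\mathcal{V}_2)$ (or the degenerate case $\mathcal{V}_1=\{v_0\}$, in which $v_0$ itself already has enough incoming edges from outside to satisfy condition~1) supplies enough nodes in $\mathcal{X}^{r}_{\mathcal{V}_1}$ or $\mathcal{X}^{r}_{\mathcal{V}_2}$ to secure $r$-robustness.

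Finally, for the ``moreover'' implication I argue by contrapositive. If $\mathcal{G}$ fails $(r,s)$-robustness, I pick a witnessing pair $(\mathcal{V}_1,\mathcal{V}_2)$ with $|\mathcal{X}^{r}_{\mathcal{V}_\ell}|<|\mathcal{V}_\ell|$ for both $\ell$ and $|\mathcal{X}^{r}_{\mathcal{V}_1}|+|\mathcal{X}^{r}_{\mathcal{V}_2}|\le s-1$, and define $\mathcal{V}'_\ell=\mathcal{V}_\ell\setminus\mathcal{X}^{r}_{\mathcal{V}_\ell}$, which are nonempty and disjoint. A node $v\in\mathcal{V}'_\ell$ has at most $r-1$ incoming edges from $\mathcal{V}\setminus\mathcal{V}_\ell$ (since $v\notin\mathcal{X}^{r}_{\mathcal{V}_\ell}$) plus at most $|\mathcal{X}^{r}_{\mathcal{V}_\ell}|\le s-1$ incoming edges from $\mathcal{X}^{r}_{\mathcal{V}_\ell}$, hence at most $r+s-2$ incoming edges from outside $\mathcal{V}'_\ell$. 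Thus $\mathcal{X}^{r+s-1}_{\mathcal{V}'_\ell}=\emptyset$ for both $\ell$, and $(\mathcal{V}'_1,\mathcal{V}'_2)$ witnesses failure of $(r+s-1)$-robustness, as desired. The main obstacle across the lemma is (ii), where the extra unit of slack in the $s$-coordinate is not a direct consequence of monotonicity and genuinely exploits that $(r,s)$-robustness holds for \emph{every} disjoint pair, not just the pair currently under inspection.
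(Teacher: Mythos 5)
The paper itself does not prove this lemma: it is stated in the appendix and imported wholesale from the cited thesis \cite{LeBlancPhD}, so there is no in-paper argument to compare against. Judged on its own, most of your proposal is sound. Items (i), (v), (vi), the case split for (vii), and the ``moreover'' clause are exactly the standard monotonicity and counting arguments; in particular your contrapositive for the ``moreover'' part --- passing to $\mathcal{V}'_\ell=\mathcal{V}_\ell\setminus\mathcal{X}^{r}_{\mathcal{V}_\ell}$ and bounding the external in-degree of its members by $(r-1)+(s-1)=r+s-2$ --- is complete and correct.

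Two items have genuine gaps. For (ii), you rightly flag it as the delicate case, but the sketch does not close: monotonicity already gives $\vert\mathcal{X}^{r-1}_{\mathcal{V}_1}\vert+\vert\mathcal{X}^{r-1}_{\mathcal{V}_2}\vert\geq s$, and one auxiliary application of $(r,s)$-robustness to $(\mathcal{V}_1\setminus\mathcal{X}^{r-1}_{\mathcal{V}_1},\mathcal{V}_2)$ does not visibly yield the missing $+1$, because the $s$ nodes it certifies may be exactly the ones already counted. The argument that works reduces to the case $\mathcal{X}^{r}_{\mathcal{V}_\ell}=\mathcal{X}^{r-1}_{\mathcal{V}_\ell}$ with the sum equal to $s$, then applies robustness to the pair $(\mathcal{V}_1\setminus\mathcal{X}^{r-1}_{\mathcal{V}_1},\,\mathcal{V}_2\setminus\mathcal{X}^{r-1}_{\mathcal{V}_2})$ and to further derived pairs, exploiting that every node outside $\mathcal{X}^{r-1}_{\mathcal{V}_\ell}$ has at most $r-2$ in-neighbours outside $\mathcal{V}_\ell$; this case analysis must actually be written out. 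For (iii), your cut-set argument is the undirected one: deleting a cut $S$ with $\vert S\vert\leq r-1$ from a digraph destroys strong connectivity, which gives you one source component of the condensation whose external in-edges all pass through $S$, but not necessarily a second such set, so the violating pair $(\mathcal{V}_1,\mathcal{V}_2)$ need not exist (with the paper's definition of connectivity, the single arc from $a$ to $b$ is $1$-robust yet not strongly connected, so the directed statement requires care). Your witness for the non-implication is also off: a directed cycle is only $1$-connected; the standard example is two cliques joined by a perfect matching, which is highly connected but fails to be $2$-robust. Finally, in (iv) the root of the spanning tree corresponds to the unique \emph{source} of the condensation (a component with no incoming edges from the others), not the unique sink; the underlying observation --- that two source components would jointly violate $1$-robustness --- is the right one, so this is only a slip of terminology.
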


Generally, it is clear that $(r,s)$-robustness is more restrictive than $r$-robustness. The 5 nodes graph in Fig.~\ref{fig:robustgraph}  can be shown to be $(2,2)$-robust, but not $3$-robust. 
%Note that with five vehicles, this graph is the minimal one to have 
%this robustness property. Thus, removing any edge destroys being $(2,2)$-robust.
From computational point of view, checking robustness properties is difficult since the problem needs combinatorial calculations. However, tending the size of random graphs to infinity makes them robust \cite{zhangfata}.

%\section*{References}

\bibliography{mybibfile}

\end{document}